\definecolor{syn}{HTML}{85B6B2}
\definecolor{redu}{HTML}{4E79A7}
\definecolor{unq1}{HTML}{E49444}
\definecolor{unq2}{HTML}{E15759}
\newtheorem{definition}{Definition}
\theoremstyle{definition}
\newtheorem{example}{Example}
\newtheorem{theorem}{Theorem}
\newtheorem{corollary}{Corollary}
\renewcommand\autoref[1]{\cref{#1}}
\crefname{theorem}{thrm.}{thrms.}
\crefname{enumi}{case}{cases}
\crefname{equations}{Equations}{Equations}
\definecolor{codegreen}{rgb}{0,0.6,0}
\definecolor{codegray}{rgb}{0.5,0.5,0.5}
\definecolor{codepurple}{rgb}{0.58,0,0.82}
\definecolor{backcolour}{rgb}{0.95,0.95,0.92}
\lstdefinestyle{mystyle}{
    backgroundcolor=\color{backcolour},   
    commentstyle=\color{codegreen},
    keywordstyle=\color{magenta},
    numberstyle=\tiny\color{codegray},
    stringstyle=\color{codepurple},
    basicstyle=\ttfamily\footnotesize,
    breakatwhitespace=false,         
    breaklines=true,                 
    captionpos=b,                    
    keepspaces=true,                 
    numbers=left,                    
    numbersep=5pt,                  
    showspaces=false,                
    showstringspaces=false,
    showtabs=false,                  
    tabsize=2
}
\DeclareFontFamily{U}{mathx}{\hyphenchar\font45}
\DeclareFontShape{U}{mathx}{m}{n}{
      <5> <6> <7> <8> <9> <10>
      <10.95> <12> <14.4> <17.28> <20.74> <24.88>
      mathx10
      }{}
\DeclareSymbolFont{mathx}{U}{mathx}{m}{n}
\DeclareMathSymbol{\bigtimes}{1}{mathx}{"91}
\begin{document}

\preprint{APS/123-QED}

\title{Partial Information Decomposition for Continuous Variables based on Shared Exclusions:  Analytical Formulation and Estimation }

\author{David A. Ehrlich}
\thanks{Equally contributing first author}
\email{davidalexander.ehrlich@uni-goettingen.de}
\affiliation{%
 Göttingen Campus Institute for Dynamics of Biological Networks\\
 Universität Göttingen\\
 Göttingen
}

\author{Kyle Schick-Poland}%
\thanks{Equally contributing first author}
\email{kyle.schick-poland@uni-goettingen.de}
\affiliation{%
 Göttingen Campus Institute for Dynamics of Biological Networks\\
 Universität Göttingen\\
 Göttingen
}
\affiliation{%
 Honda Research Institute Europe GmbH, Offenbach am Main
}

\author{Abdullah Makkeh}
\affiliation{%
 Göttingen Campus Institute for Dynamics of Biological Networks\\
 Universität Göttingen\\
 Göttingen
}
\author{Felix Lanfermann}
\author{Patricia Wollstadt}
\affiliation{%
 Honda Research Institute Europe GmbH, Offenbach am Main
}
\author{Michael Wibral}
\affiliation{%
 Göttingen Campus Institute for Dynamics of Biological Networks\\
 Universität Göttingen\\
 Göttingen
}
\date{\today}

\begin{abstract}
Describing statistical dependencies is foundational to empirical scientific research. For uncovering intricate and possibly non-linear dependencies between a single target variable and several source variables within a system, a principled and versatile framework can be found in the theory of Partial Information Decomposition (PID). Nevertheless, the majority of existing PID measures are restricted to categorical variables, while many systems of interest in science are continuous. In this paper, we present a novel analytic formulation for continuous redundancy--a generalization of mutual information--drawing inspiration from the concept of shared exclusions in probability space as in the discrete PID definition of $I^\mathrm{sx}_\cap$. Furthermore, we introduce a nearest-neighbor based estimator for continuous PID, and showcase its effectiveness by applying it to a simulated energy management system provided by the Honda Research Institute Europe GmbH. This work bridges the gap between the measure-theoretically postulated existence proofs for a continuous $I^\mathrm{sx}_\cap$ and its practical application to real-world scientific problems.
\end{abstract}

\maketitle

\section{Introduction}
The pursuit of discovering and quantifying dependencies between different experimental variables lies at the heart of empirical and data-driven scientific research. However, conventional tools such as correlation analysis might fail to capture relevant associations if these associations are non-linear. This ascertains the need for a comprehensive framework capable of capturing both linear and non-linear dependencies. Such a framework can be found in information theory, which was originally introduced for the analysis of communication channels by Claude Shannon~\cite{shannon} and has since become a general-purpose approach to data analysis. 

A basic quantity of information theory is the mutual information~\cite{shannon,fano1961transmission}

\begin{equation*}
    I(T:S) = \sum_{t, s} p_{T,S}(t, s)\log_2 \frac{p_T(t) \, p_S(s)}{p_{T,S}(t,s)} \, ,
\end{equation*}
which captures the dependency between two variables $T$ and $S$ by quantifying how much the uncertainty about one variable can be reduced by observing the respective other, and which is used as a general measure of dependency between two variables in many applications \cite{coverthomas}.

In many situations, however, there are variables of interest $T$ which depend not only on one but multiple variables $\bm S = (S_1, \dots S_n)$, and where the distinction between the $S_i$ is conceptually important. Hence arises the need to unravel the specific contributions of these source variables $S_i$ to the information contained in the target $T$. Note that this information may be distributed among the source variables in very different ways: While parts of the information might exclusively be available from one variable but not from others (\enquote{unique information}), other parts can be obtained from either one or another (\enquote{redundant information}) and finally some pieces of information might only be revealed when considering multiple sources simultaneously (\enquote{synergistic information}). Identifying and quantifying these information \emph{atoms} is the subject of Partial Information Decomposition (PID)~\cite{williams2010, gutknecht2021}, which has been gaining popularity as a tool for the detailed information-theoretic analysis of variable dependencies, for example in neuroscience \cite{wibral2017,wollstadt2022rathbun,sherrill2021partial,pica2017quantifying,faes2017multiscale,schulz2021gaba,luppi2020synergistic}, machine learning \cite{tax2017partial,ehrlich2023complexity,wollstadt2023feature,graetz2023infomorphic,tokui2021disentanglement,liang2023multimodal,liang2023factorized,ingel2022quantifying}, engineering \cite{wollstadt2021turbofan}, sociology~\cite{varley2022}, linguistics~\cite{socolof2022measuring} and climatology~\cite{glowienka2020comparing}.

The PID framework was originally envisioned by Williams and Beer~\cite{williams2010}, who showed that the information atoms of unique, redundant, and synergistic information could not be defined using classical information-theoretic terms, but required the introduction of novel axioms. Based on their proposed set of axioms, the authors introduced a measure of redundant information as a generalization of mutual information, that allowed to quantify the PID atoms. Following the original work of Williams and Beer, a series of different quantification schemes for PIDs has been proposed, which mostly adopt the proposed overall structure and either adopt the original set of axioms or modify it~\cite{broja,makkeh,finn2018pointwise,ince,kolchinsky2022novel}. In particular, most works propose alternative measures of redundancy, which are grounded in a multitude of different and partly non-compatible desiderata~\cite{makkeh,broja}.

However, while most variables of interest encountered in science and engineering are continuous-valued, most of the PID measures suggested to this date are only valid for categorical random variables, i.e., variables with a discrete alphabet. Only a few continuous PID measures have been introduced in recent time, which each have distinct  operational interpretations: \citet{barrett2015exploration} were among the first to introduce a fully continuous PID for Gaussian random variables. Drawing on concepts of game theory, the measure by \citet{ince} based on common changes in surprisal can be straightforwardly applied to continuous variables. Similarly, the measure by \citet{kolchinsky2022novel}, which is build on the notion of the Blackwell order, and the redundant information neural estimation by \citet{kleinman2021redundant} also transfer canonically to the continuous domain. Finally, the measure introduced by \citet{pakman} provides a continuous PID definition based on the decision-theoretic discrete concepts introduced by \citet{broja}, while \citet{milzman2022signed} introduce continuous generalizations of $I^\mathrm{min}_\cap$~\cite{williams2010} and $I^\mathrm{PM}_\cap$~\cite{finn2018pointwise}.

Overall, these measures all draw on different desiderata which make the measures applicable to different situations that fit their operational interpretation. If the system is best described by a memoryless agent acting on an ensemble of states, the shared-exclusion PID measure $I^\mathrm{sx}_\cap$ is most suitable, which has been introduced by \citet{makkeh} for categorical variables. Recently, \citet{schick-poland} showed that an extension of this measure to continuous variables is possible, by proving that a continuous version of the  measure is measure-theoretically well-defined. However, since the provided existence proofs are not fully constructive, neither an analytical definition nor a practical estimation procedure for such a continuous $I^\mathrm{sx}_\cap$ has been suggested to this date. With this paper, we fill the gap for a continuous PID measure based on the shared-exclusion principle which draws only on concepts of probability and information theory.

The main contributions of this paper are (1) the introduction of a tractable analytical PID definition inspired by the measure-theoretic definition of continuous $I^\mathrm{sx}_\cap$ from \citet{schick-poland} and its application to simple theoretical examples in \autoref{sec:analytical}, (2) the development of an estimator for the associated redundancy measure, which draws on concepts of the k-nearest-neighbours based estimator for mutual information by \citet{kraskov} in \autoref{sec:estimator} and (3) the demonstration of the efficacy of our continuous $I^\mathrm{sx}_\cap$ measure in uncovering variable dependencies in data from an energy management system in \autoref{sec:application}.

\section{The PID problem for two source variables}
\label{sec:pid_bivariate}

\begin{figure}
    \centering
    \includegraphics[scale=1]{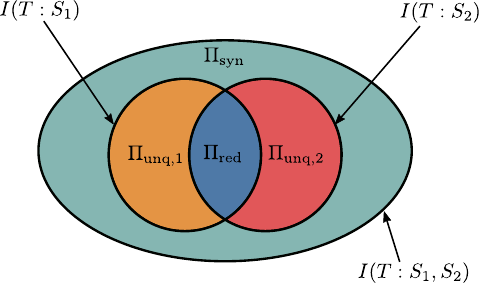}
    \caption{\textbf{Partial Information Decomposition reveals the intricate interdependencies between classical information-theoretic quantities~\cite{williams2010}.} Using PID, the mutual information $I(T:S_1, S_2)$ (large oval) and the marginal information terms $I(T:S_1)$ and $I(T:S_2)$ (circles) can simultaneously be dissected into a total of four information atoms (colored areas): The unique information that $S_1$ carries about $T$, $\Pi_\mathrm{unq,1}$ (orange), and likewise the unique information of $S_2$, $\Pi_\mathrm{unq,2}$ (red); the redundant information of $S_1$ and $S_2$ about $T$, $\Pi_\mathrm{red}$ (blue), and the synergistic information exclusively contained in $S_1$, and $S_2$, $\Pi_\mathrm{syn}$ (teal). The marginal mutual information terms $I(T:S_i)$ can be constructed from the \emph{redundant} and the corresponding \emph{unique} atoms, while the joint mutual information $I(T:S_1, S_2)$ contains of all four atoms.}
    \label{fig:pid}
\end{figure}

To provide a succinct account of the PID problem and its intuitive meaning, we start with a discussion of PID using only two source variables---referred to as \emph{bivariate} PID---in this section. A comprehensive treatment of the general multivariate PID is deferred to  \cref{sec:nd_pid}.

The mutual information $I(T:\bm S)$ that two source variables $\bm S=(S_1, S_2)$ hold about a target variable $T$ can be conveyed by the source variables in four distinct ways (see \cref{fig:pid}): Some parts of the information might be \emph{unique} to either $S_1$ or $S_2$, meaning they are inaccessible from the other variable. Other parts may be \emph{redundant} between both, meaning that they can be obtained from either source, while finally some parts only become accessible \emph{synergistically} when both sources are observed together---like a cipher text and its corresponding password are jointly necessary to recover the plain text. We denote these PID atoms by $\Pi_\mathrm{unq,1}$ and $\Pi_\mathrm{unq,2}$ for the unique information, by $\Pi_\mathrm{red}$ for the redundant information, and by $\Pi_\mathrm{syn}$ for the synergistic information.

To understand how the joint mutual information $I(T: \bm S)$ relates to the PID atoms, a first step is to look at the \emph{marginal} mutual information terms $I(T:S_1)$ and $I(T:S_2)$, which quantify the statistical dependence between the target $T$ and a single source variable. These two information quantities, however, need not be disjoint, because each mutual information, while containing the respective unique atoms $\Pi_\mathrm{unq,1}$ or $\Pi_\mathrm{unq,2}$, also contain the \emph{redundant} atom $\Pi_\mathrm{red}$. Finally, besides the two unique and the redundant term, the joint mutual information $I(T: \bm S)$ also contains the synergistic atom $\Pi_\mathrm{syn}$, which is part of neither of the marginal mutual information terms because it is inaccessible from any single source alone. These requirements lead to the so-called \emph{consistency equations} for two source variables~\cite{williams2010,gutknecht2021}:
\begin{equation}
\begin{aligned}
    &I(T:S_1) &&= \Pi_\mathrm{red} + \Pi_\mathrm{unq,1}\\
    &I(T:S_2) &&= \Pi_\mathrm{red} + \Pi_\mathrm{unq,2}\\
    &I(T:\bm{S}) &&= \Pi_\mathrm{red} + \Pi_\mathrm{unq,1} + \Pi_\mathrm{unq,2} + \Pi_\mathrm{syn} \, .
    \label[equations]{eq:consistency}
\end{aligned}
\end{equation}

Note that, in the bivariate case, there are four unknown PID atoms, yet only three consistency equations providing constraints. Hence, the system is underdetermined and has to be resolved by defining an additional quantity. Typically, this is done by explicitly quantifying the redundancy $\Pi_\mathrm{red}$ (e.g., \cite{williams2010,finn2018pointwise,makkeh}). For more than two source variables, a generalized notion of redundancy $I_\cap$ needs to be introduced which captures the redundancy between \emph{sets} of source variables (see \cref{sec:nd_pid_lattice}).

In recent years, a variety of different redundancy measures has been suggested which fulfill a number of partly incompatible assumptions and have different operationalizations, e.g.~from decision~\cite{broja} or game theory~\cite{ince}. In this work, we focus on the $I^\mathrm{sx}_\cap$ redundancy measure~\cite{makkeh}, which is rooted in the probability- and information-theoretic considerations of Fano~\cite{fano1961transmission}.

\subsection{The $I^{\mathrm{sx}}_\cap$ redundancy measure and the measure-theoretic definition of continuous $I^{\mathrm{sx}}_\cap$}

The $I^\mathrm{sx}_\cap$ redundancy measure has originally been defined for discrete random variables~\cite{makkeh}. To understand the intuition behind the definition, consider a discrete target random variable $T$ and two discrete source random variables $S_1$ and $S_2$, taking on realizations $t$ and $(s_1, s_2)$, respectively. 
Notice that the joint mutual information $I(T:S_1, S_2)$ between the target and both sources can be written as an expectation value of \emph{local} mutual information terms $i(t:s_1, s_2)$ as~\cite{coverthomas}
\begin{align}
    \label{eq:sum_mi}
    I(T:S_1, S_2) = \sum_{t, s_1, s_2} p_{T,S}(t,s_1,s_2) \, i(t:s_1, s_2) \, ,
\end{align}
where
\begin{subequations}
\begin{align}
    i(t: s_1, s_2) 
    &= \log_2\left[\frac{p_{T|S}(t |s_1, s_2)}{p_T(t)}\right]\\
    &= \log_2\left[\frac{p_{T|S}(t | S_1 = s_1 \land S_2 = s_2)}{p_T(t)}\right] \, . \label{eq:local_mi_logicstatement}
\end{align}
\label{eq:local_mi}
\end{subequations}

The local mutual information reflects how we need to update our beliefs about the occurrence of the target event $T=t$ when we know both source events $S_1=s_1$ \emph{and} $S_2=s_2$ have occurred~\cite{fano1961transmission}: If knowledge of the event $S_1=s_1$ and $S_2=s_2$ happening increases the likelihood of guessing the correct target event $T=t$ (i.e.,~$p_{T|S_1,S_2}(t|s_1, s_2) > p_T(t)$), the event is called \emph{informative} and $i(t:s_1, s_2)$ is positive. Vice versa, if one is less likely to guess the target event $T=t$ after knowing the source events (i.e.,~$p_{T|S_1,S_2}(t|s_1, s_2) < p_T(t)$), the event is called \emph{misinformative} and $i(t:s_1, s_2)<0$.

In the second transformation step, i.e., in \cref{eq:local_mi_logicstatement}, we explicitly state the logical statement whose observation leads us to update our beliefs about $T=t$. \citet{makkeh} suggest that building on this basic idea, a measure for redundancy can be constructed by instead measuring how our beliefs about $T=t$ are updated if we only know that \emph{either} $S_1=s_1$ \emph{or} $S_2=s_2$ have occurred, which leads to the local expression
\begin{align}
    \label{eq:discrete_bivariate_local_sxpid}
    i_\cap^\mathrm{sx}(t: s_1; s_2) = \log_2\left[\frac{p_{T \vert S}(t | S_1 = s_1 \lor S_2 = s_2)}{p_T(t)}\right] \, .
\end{align}

Averaging this quantity with weights $p(t, s_1, s_2)$ as in \cref{eq:sum_mi} then yields a \emph{global} measure of redundant information between sources $S_1$ and $S_2$ about $T$, denoted by $I^\mathrm{sx}_\cap$, i.e.,~\cite{makkeh}
\begin{equation}
    I^\mathrm{sx}_\cap = \sum_{s_1,s_2,t} p_{T, S}(t,s_1,s_2) \; i_\cap^\mathrm{sx}(t: s_1; s_2) \, .
    \label{eq:discrete_bivariate_global_sxpid}
\end{equation}

This redundancy measure is symmetric with respect to swapping the sources $S_1$ and $S_2$ and invariant towards any arbitrary bijective relabelling of the discrete realizations of any of the random variables. Furthermore, it fulfills a target chain rule and is differentiable with respect to the underlying probability distribution~\cite{makkeh}.

Equivalently, \cref{eq:discrete_bivariate_local_sxpid} can be derived from the regions of probability space which $S_1=s_1$ and $S_2=s_2$ jointly render impossible (see \cref{appendix:isx_visual} for a more visual explanation), which is why the PID definition is referred to as ``shared exclusion PID'' or ``SxPID'' for short. For more than two source variables, the logical expression in the numerator has to be extended to account for disjunctions between sets of source events occurring simultaneously (see \cref{sec:multivariate_sxpid}).

In previous work, \citet{schick-poland} showed that the $I^\mathrm{sx}_\cap$ redundancy measure can be generalized to continuous source and target variables. In particular, the authors proved the existence of a measure-theoretic generalization of $I^\mathrm{sx}_\cap$, which can handle arbitrarily many source variables while being invariant under isomorphic transformations and differentiable with respect to the joint distribution of sources and target. The proof of the existence of this generalization builds on established theorems from probability and measure theory. It is shown that the derived measure inherits all the above mentioned desirable properties of the discrete $I^\mathrm{sx}_\cap$ measure and is applicable to any possible discrete or continuous variable settings.

However, due to the provided proofs not being entirely constructive, a concrete analytic formulation of this continuous $I^\mathrm{sx}_\cap$ measure remains elusive. While proven to exist from measure-theoretic principles, the way to construct a transformation-invariant shared-exclusion redundancy measure is still not explicit. Therefore, in the next section we provide an analytical definition for a purely continuous $I^\mathrm{sx}_\cap$ measure which, while not adhering to all properties described by \citet{schick-poland}, provides a practical and interpretable shared-exclusion based continuous PID definition. In the following, we show how such a formulation can be obtained, present an estimation procedure and discuss its relation to the measure-theoretic derivation.

\subsection{An analytical formulation of a continuous $I^{\mathrm{sx}}_\cap$ measure}
\label{sec:analytical}

\subsubsection{Definition}

To make a continuous $I^\mathrm{sx}_\cap$ measure available to practitioners, an important first step is to find an analytic formulation. In this section, we present such a formulation and show how it can be computed numerically for a given probability density function. Since in most practical applications one has only access to a finite sample of data from an unknown distribution, we introduce an estimator for this redundancy measure based on k-nearest-neighbour distances in \cref{sec:estimator}.

The main difficulty for finding an analytical formulation for the continuous $I^\mathrm{sx}_\cap$ measure is due to the disjunction (logical \enquote{or}) in \cref{eq:discrete_bivariate_local_sxpid}. While conjunctions (logical \enquote{and}) are ubiquitous in classical continuous information theory and correspond to simple joint probability densities, disjunctive statements have no established counterpart in the continuous regime. In the following we will introduce an intuitive notion of how such a quantity can be defined, while we defer a more rigorous derivation to \cref{appendix:derivation_density_logical_statements}.

For discrete random variables, the probability of the event $S_1 = s_1 \lor S_2 = s_2$ is given by the inclusion-exclusion-rule as
\begin{equation}
\begin{gathered}
    \mathbb{P}(S_1 = s_1 \lor S_2 = s_2)\\
    = p_{S_1}(s_1) + p_{S_2}(s_2) - p_{S_1, S_2}(s_1, s_2) \, ,
\end{gathered}
    \label{eq:probability_masses}
\end{equation}
where the last term  $p_{S_1, S_2}(s_1, s_2)$ is the probability of both $S_1=s_1$ \emph{and} $S_2=s_2$ occurring, which is double-counted in the sum of marginals and thus needs to be subtracted once.

To make the leap from discrete to continuous variables, we follow the steps of Jaynes' derivation of the differential entropy~\cite{jaynes_2003} by reinterpreting the probability mass functions $p_{S_i}$ of \cref{eq:probability_masses} as binned distributions of an underlying continuous probability density $f_{S_i}$. That is, we assume the density $f_{S_i}$ to be given, such that the probability mass of a specific binned realization can be calculated as
\begin{equation*}
    p(s_i) = \int_{U_{s_i}} \mathrm{d}s_i' \, f_{S_i}(s_i') \approx |U_{s_i}| \, f_{S_i}(s_i) \, ,
\end{equation*}
where $\{U_{s_i}\}_{s_i}$ is a partition of the support of $f_{S_i}$ and $|U_{s_i}|$ is the area of one of the sets of the partition. The probability of the binned disjunction, i.e., the probability of either $S_1$ falling into the bin $U_{s_1}$ \emph{or} $S_2$ falling into $U_{s_2}$ then becomes
\begin{equation}
    \begin{gathered}
    \mathbb{P}(S_1 \in U_{s_1} \lor S_2 \in U_{s_2})\\
    \approx |U_{s_1}| \, f_{S_1}(s_1) + |U_{s_2}| \, f_{S_2}(s_2)\\
    - |U_{s_1}||U_{s_2}| \, f_{S_1, S_2}(s_1, s_2) \, .
    \end{gathered}
    \label{eq:probability_masses_approx}
\end{equation}

To get to a fully continuous description, one now has to take the limit of the partitionings $\{U_{s_1}\}_{s_1}$ and $\{U_{s_2}\}_{s_2}$ becoming increasingly fine. If this limit is sufficiently well-behaved, the local density of the evaluation points $s_i$ can in the limit be represented by a function (referred to by Jaynes as an ``invariant measure''~\cite{jaynes1968prior}) $m(s_i)$ such that
\begin{equation*}
    \lim_{n \to \infty} (n |U_{s_i}|) = \frac{1}{m(s_i)} \, ,
\end{equation*}
with
\begin{equation*}
    \lim_{n \to \infty} \frac{1}{n} (\text{number of points in } A) = \int_A \mathrm{d}s_i \, m(s_i)
\end{equation*}
for arbitrary open sets $A$ in the sample space of $S_i$. Specifically, the measure $m_{S_i}$ represents the density of evaluation points, such that a more dense collection of evaluation points causes the corresponding binning intervals to shrink more rapidly. From this, we arrive at the formulation
\begin{equation}
\begin{gathered}
    \mathbb{P}(S_1 \in U_{s_1} \lor S_2 \in U_{s_2}) \\
    \approx \frac{f_{S_1}(s_1)}{n \, m_{S_1}(s_1)} + \frac{f_{S_2}(s_2)}{n \, m_{S_2}(s_2)} - \frac{f_{S_1, S_2}(s_1, s_2)}{n^2 \, m_{S_2}(s_2) m_{S_2}(s_2)}
\end{gathered}
\end{equation}
for the probability mass of a sufficiently small disjunction in the continuous regime. Note that, while the whole expression naturally tends to zero in the limit $n \to \infty$, the subtracted term scales with the number of discrete points squared, $n^2$. Thus, it will become negligible compared to the two terms before under the assumption that $f_{S_1, S_2}(s_1, s_2)$ is finite everywhere.

Nonetheless, neglecting the third term still leaves a choice of how the invariant measures $m_{S_1}$ and $m_{S_2}$ are chosen with respect to each other. Intuitively, this freedom is reflective of the \emph{relative scale} with which the variables are considered: Since ultimately, probability densities are defined by what finite probability masses they integrate to in small neighbourhoods, the relative scale between two variables defines how to compare the neighborhoods for two variables. If the two marginal distributions of the sources are the same, a canonical choice is to take the same partitioning for both variables. On the other hand, if the variables come from very different distributions, the invariant measures $m_{S_i}$ need to be selected to account for that.

On a practical note, the division by the invariant measures $m_{S_i}$ can be absorbed into the definition of the probability densities $f_{S_i}$ by a suitable variable pretransformation. How the variables should be preprocessed, if at all, depends on the nature of the variables in question. Firstly, if the variables describe equivalent source processes, no pretransformation is necessary, and linear pretransformations such as standardization do not make a difference (see \cref{apx:preprocessing} for a more detailed comparison). On the other hand, if two variables measure the same physical quantity but with different scales, e.g., body height measured in centimeters versus in feet, a reasonable preprocessing step would be the rescaling of one of the variables or---if the variables come from a similar distribution---a standardization of both. Alternatively, if the measurement precision, i.e., the relative pointwise deviation, can be assumed to be proportional to the probability density at that point, the original probability densities $f$ may be replaced with the copula densities $c$~\cite{copula_theory}, which furthermore happens to make the measure invariant under invertible mappings of individual variables before this step.

Finally, these considerations lead to the following definition for the probability of a continuous disjunction for two random variables $S_1$ and $S_2$, which have possibly been preprocessed beforehand.
\begin{definition} \label{definition:density_or_two_surces}

Let $S_1, S_2$ be continuous random variables with associated densities $f_{S_1}, f_{S_2}$. Then we define the quasi-density of the disjunctive logic statement $S_1 = s_1 \lor S_2 = s_2$ to be
\begin{align*}
    f_{S_1 \lor S_2}(s_1 \lor  s_2) := f_{S_1}(s_1) + f_{S_2}(s_2) \, .
    \label{eq:definition_or}
\end{align*}

\end{definition}
Note that in the above definition, it is assumed that the precision of one variable does not depend on the value of the other variable and that the relative precision between the two variables is constant. Furthermore, while the local density of a disjunction as defined above does not fulfill all properties of a density function in the mathematical sense (particularly, it does not integrate to one), it nevertheless serves as a local approximate proportionality of probability in relation to magnitude of area in state space.

Given this definition for a continuous disjunction, we can write down an analytical formulation for the continuous local shared-exclusion redundancy.

\begin{definition} \label{definition:local_redundancy}

Let $S_1, S_2$ be continuous random variables with associated densities $f_{S_1}, f_{S_2}$.

Then we define the continuous local shared-exclusion redundancy for two sources as
\begin{gather*}
   i^\mathrm{sx}_\cap(t, \{s_1\}\{s_2\}) 
   := \log_2 \frac{f_{T| S_1 \lor S_2}(t| s_1 \lor s_2)}{f_T(t)}\\
   = \log_2 \frac{f_{T, S_1}(t, s_1) + f_{T, S_2}(t, s_2)}{f_T(t)(f_{S_1}(s_1) + f_{S_2}(s_2))}.
   \label{eq:bivariate_local_sxpid}
\end{gather*}
\end{definition}

This measure for continuous redundancy shares many of the favourable properties of its discrete counterpart. In particular, it is localizable (i.e., the global measure can be expressed as an expected value of local values) and can be generalized to arbitrary numbers of source variables (see~\cref{sec:nd_pid}). Furthermore, the continuous $i^\mathrm{sx}_\cap$ measure is differentiable with respect to the underlying probability density function (see~\cref{apx:differentiability}).

Despite being initially inspired by the ansatz from~\cite{schick-poland}, \cref{definition_1} deviates from the measure-theoretic derivations. The main difference is the absence of the auxiliary random variable designed in~\cite{schick-poland} to carry the logical structure of the setting. This variable, however, stems from a perspective of abstractness that disregards all differences in the measurement scales between the individual density functions and relies on an auxiliary random variable carrying all the irregularity in this existence theorem. In actual applications this perspective needs to be concretized and made evaluable, which introduces the non-uniqueness of the definition due to the measurement scales and preprocessing. Thus \cref{definition:density_or_two_surces} is a single choice of concretization that can be shown to vary smoothly with the overall density as shown in \cref{appendix:derivation_density_logical_statements}.

Comparing this definition for continuous local shared-exclusion redundancy to the discrete local shared-exclusion redundancy of \cref{eq:discrete_bivariate_local_sxpid}, which can be rewritten as
\begin{gather*}
   i^\mathrm{sx, discrete}_\cap(t, \{s_1\}\{s_2\})\\
   = \log_2 \frac{p_{T, S_1}(t, s_1) + p_{T, S_2}(t, s_2) - p_{T, S_1, S_2}(t, s_1, s_2)}{p_T(t)(p_{S_1}(s_1) + p_{S_2}(s_2) - p_{S_1, S_2}(s_1, s_2))},
\end{gather*}
reveals that the continuous definition can be obtained from the discrete one by replacing probability masses $p$ with probability densities $f$ and ignoring the exclusion terms in the disjunctions. Because of these similarities in the analytic form, the derivation from shared exclusions of \citet{makkeh} carries over to the continuous domain, which is why the continuous redundancy measure inherits the same operational interpretation of a memoryless agent trying to guess the target from only the redundant local information on an ensemble of realizations.

As in the case of the mutual information, the average redundancy is defined as the expectation value of the local values, i.e.,
\begin{gather*}
    I^\mathrm{sx}_\cap(T, \{S_1\}\{S_2\})\\
    = \int \mathrm{d}t \, \mathrm{d}s_1 \, \mathrm{d}s_1 \, f_{T,S_1,S_2}(t, s_1, s_2) \, i^\mathrm{sx}_\cap(t, \{s_1\}\{s_2\}).
\end{gather*}

Finally, given this quantification of the redundant information in conjunction with the three classical mutual information terms between $T$ and the sources, the four information atoms can be computed by solving \cref{eq:consistency}, which in the general case is known as a Moebius inversion~\cite{williams2010} (see \cref{sec:nd_pid_lattice}).

\subsubsection{Results for toy examples}
\begin{table*}
\caption{\textbf{Numerical evaluations for four example distributions show qualitative similarities between discrete and the continuous $I^\mathrm{sx}_\cap$ on the three selected logic gates, while qualitative differences can be observed for the sum example.} In the continuous case, the sources are distributed normally, $S_i \sim \mathrm{Normal}(\mu=0, \sigma=1)$, except for $S_2$ in the redundant gate which is set equal to $S_1$, and the target is constructed according to the pictograms with added Gaussian noise ($\sigma = 0.01$) to make the mutual information $I(T:S_1, S_2)$ finite. In the discrete setting, the variables are drawn from uniform binary distributions and no noise is added to the target. 
\label{tab:analytical_examples}}
\begin{ruledtabular}
    \begin{tabular}{rcccc}
        &  \multicolumn{1}{c}{Redundant Gate} & \multicolumn{1}{c}{Copy Gate} & \multicolumn{1}{c}{Unique Gate} & \multicolumn{1}{c}{Sum}\\
        \hline
        \\[-0.2cm]
        \includegraphics[width=0.12\textwidth]{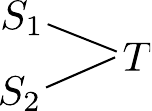}
        & \includegraphics[width=0.12\textwidth]{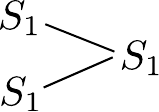}
        & \includegraphics[width=0.15\textwidth]{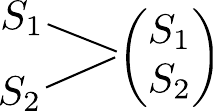}
        & \includegraphics[width=0.12\textwidth]{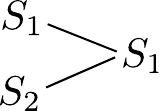}
         & \includegraphics[width=0.17\textwidth]{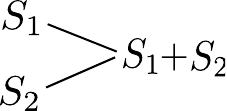}\\
        \\[-0.2cm]
        \hline
        Continuous Gates\\
        \hline
        $\Pi_\mathrm{syn}$ & $0.000 \, \mathrm{bits}$ & $6.644 \, \mathrm{bits}$ & $5.525  \, \mathrm{bits}$ & $6.497  \, \mathrm{bits}$\\
        $\Pi_\mathrm{unq,1}$ & $0.000    \, \mathrm{bits}$ & $0.000 \, \mathrm{bits}$ & $1.119 \, \mathrm{bits}$ & $0.147  \, \mathrm{bits}$\\
        $\Pi_\mathrm{unq,2}$ & $0.000    \, \mathrm{bits}$ & $0.000 \, \mathrm{bits}$ & $-5.525 \, \mathrm{bits}$ & $0.147  \, \mathrm{bits}$\\
        $\Pi_\mathrm{red}$ & $6.644 \, \mathrm{bits}$ & $6.644 \, \mathrm{bits}$ & $5.525   \, \mathrm{bits}$ & $0.353  \, \mathrm{bits}$\\
        $I(T:S_1, S_2)$ & $6.644 \, \mathrm{bits}$ &  $13.288 \, \mathrm{bits}$ & $6.644 \, \mathrm{bits}$ & $7.144  \, \mathrm{bits}$\\
        \hline
        Discrete Gates\\
        \hline
        $\Pi_\mathrm{syn}$ & $0.000 \, \mathrm{bits}$ & $0.415 \, \mathrm{bits}$ & $0.415  \, \mathrm{bits}$ & $0.415  \, \mathrm{bits}$\\
        $\Pi_\mathrm{unq,1}$ & $0.000 \, \mathrm{bits}$ & $0.585 \, \mathrm{bits}$ & $0.585 \, \mathrm{bits}$ & $0.585  \, \mathrm{bits}$\\
        $\Pi_\mathrm{unq,2}$ & $0.000 \, \mathrm{bits}$ & $0.585 \, \mathrm{bits}$ & $-0.415     \, \mathrm{bits}$ & $0.585  \, \mathrm{bits}$\\
        $\Pi_\mathrm{red}$ & $1.000 \, \mathrm{bits}$ & $0.415 \, \mathrm{bits}$ & $0.415   \, \mathrm{bits}$ & $-0.085  \, \mathrm{bits}$\\
        $I(T:S_1, S_2)$ & $1.000 \, \mathrm{bits}$ &  $2.000 \, \mathrm{bits}$ & $1.000 \, \mathrm{bits}$ & $1.500  \, \mathrm{bits}$\\
    \end{tabular}
\end{ruledtabular}
\end{table*}

To gain an intuition for the continuous $I^\mathrm{sx}_\cap$ measure, we apply the analytical formulation established in the previous section to simple and intuitive toy examples. Such simple example cases can be found in logic gates, which have been employed as examples for discrete PID measures throughout the literature  (e.g., \cite{broja,griffith}). We here use continuous versions of the well-established ``redundant'', ``unique'' and ``copy'' gate as well as a sum between source variables and discuss the results compared to their discrete counterparts. In the discrete domain, the two source variables are binary with two equiprobable realizations, whereas in the continuous domain, we  draw the variables from a standard normal distribution. To ensure a finite mutual information, which is a prerequisite for a well-defined continuous PID, Gaussian noise with standard deviation $\sigma = 0.01$ is added to the target variable $T$ in the continuous case. While the choice of $\sigma$ significantly affects the numerical results, the qualitative analysis of the results remains unaffected by the precise value chosen (see \cref{apx:examples_distributions}).

While some logic gates (like the \enquote{and} gate) have no canonical continuous counterparts, we selected three gates whose basic concept can be straightforwardly transferred to the continuous domain: The redundant gate, in which the two sources are the same and equal to the target, the copy gate, in which the target is a two-dimensional variable whose components consist of the two source variables, and the unique gate, in which the target is a copy of the first source variable while the second source variable is drawn independently from the target. In addition to the three logic gates, we added the ``sum'' example, in which the source realizations are interpreted as numerical values and the target variable is the sum of the two sources.

While the computation of the discrete PID atoms is a straightforward exercise, the computation of the continuous atoms is more difficult due to the integral expressions. Note, first, that all continuous examples can be expressed as multivariate Gaussian distributions, for which an analytical closed form for the joint and marginal mutual information terms exists (see \cref{apx:examples_distributions}). The integrals involved in computing the redundancy, however, cannot be easily evaluated analytically. Therefore, we resorted to simple Monte Carlo integration techniques to numerically estimate the involved integrals. To compute the expectation value, we drew randomly from the given joint probability distributions, and computed the local $i^\mathrm{sx}_\cap$ redundancy for $2 \cdot 10^8$ samples. The PID atoms can then be computed from the analytical mutual information terms (see for instance \cref{apx:examples_distributions}) and the numerically computed redundancy via the consistency equations (see \cref{eq:consistency}).

\cref{tab:analytical_examples} shows the results for all four examples. For the continuous redundant gate, $I^\mathrm{sx}_\cap$ attributes all mutual information between the sources and the target to redundancy. This result, which can equivalently be observed in its discrete counterpart, is expected from an information measure constructed for quantifying redundant information, and stems from the fact that the probability of a disjunction between two equal events, e.g. $S_1=s_1 \lor S_1=s_1$ is trivially equal to just the probability the individual event.

In the copy gate, the two marginal mutual information terms each contribute half of the total mutual information, i.e., $I(T:\bm S) = 2I(T:S_1) = 2I(T:S_2)$. Since the two source variables are independent, learning about source event $S_1=s_1$ provides no information about the state of the second source, and vice versa, resulting in no \emph{source redundancy}~\cite{harder2013bivariate}. Nevertheless, the redundancy measured by $I^\mathrm{sx}_\cap$ is non-zero, since the two variables inform redundantly about the target event because of the mechanism with which the target is constructed from the sources (i.e., \emph{mechanistic redundancy}~\cite{harder2013bivariate}): When an agent is informed that either $S_1=s_1$ or $S_2=s_2$ occurred, the agent can redundantly exclude all target events $T = t = (s_1', s_2')$ for which both $s_1' \neq s_1$ (which is excluded by observing $S_1=s_1$) and $s_2' \neq s_2$ (which is excluded by observing $S_2=s_2$). In contrast, for the marginal mutual information $I(T:S_1)$, the hypothetical agent is informed which of the two source events actually took place, e.g., $S_1=s_1$, which enables the agent to additionally rule out all events for which $s_1' \neq s_1$ but $s_2'=s_2$. In the discrete case, these additional events for which $s_1$ does not match the first component of the target event $t$ but $s_2$ matches the second, carry a significant probability mass, which makes the redundancy smaller than the mutual information, which thus produces non-zero unique information. For the continuous copy gate, however, the additionally excluded regions carry only negligible probability, which results in negligible unique information. 

The continuous unique gate produces similar PID results as its discrete counterpart. Despite $S_2$ being independent from $S_1$ and $T$, a significant amount of the joint mutual information is assigned to redundancy, which is attributable to incidental overlaps between exclusions in probability space. Note that since only the first variable provides the full information about the target, i.e., $I(T:\bm S) = I(T:S_1)$, while $I(T:S_2)=0$, the redundancy $\Pi_\mathrm{red}$ and the unique information $\Pi_\mathrm{unq,1}$ add up to the total mutual information $I(T:\bm S)$, resulting in a positive unique contribution of $S_1$. Conversely, the redundancy $\Pi_\mathrm{red}$ and the unique information $\Pi_\mathrm{unq,2}$ need to cancel exactly, resulting in negative unique information for the uncorrelated source $S_2$. Since the unique information of $S_2$ also cancels the redundant information in the consistency equation for the full mutual information, the synergy finally needs to be equal to the redundancy to compensate. While the quantitative results differ, these results are qualitatively similar to the PID of the discrete unique gate.

Finally, for the sum gate, most of the information about the target is carried synergistically between the two source variable, reflecting the fact that knowledge of one variable hardly restricts the possible target realizations that can be taken on. Some information is unique to either of the two sources, since knowledge of just one value already makes some values for the sum less plausible, while the small redundancy in this example can likely again be attributed to mechanistic effects. For this gate, the discrete and continuous examples differ qualitatively: While in the discrete case, the unique atoms are positive and the redundancy is negative, all atoms are positive in the continuous case. These discrepancies are likely due to the fact that the discrete sum gate is not a good analogue of the continuous case, since the values of the realizations the discrete random variable can take on has no natural total ordering, while neighborhoods between values are crucial to the continuous definition.

Depending on the setting that the variables arise in, different preprocessing schemes might need to be applied to the variables before computing the PID atoms. Using different preprocessing schemes, one finds that the values agree qualitatively on most gates, which is mostly due to them being drawn from the same distributions. For a more detailed discussion, refer to \cref{apx:preprocessing}.

\subsection{KSG nearest-neighbour based estimator for mutual information}
\label{sec:estimator}

\begin{figure*}
    \centering
    \includegraphics[width=0.8\textwidth]{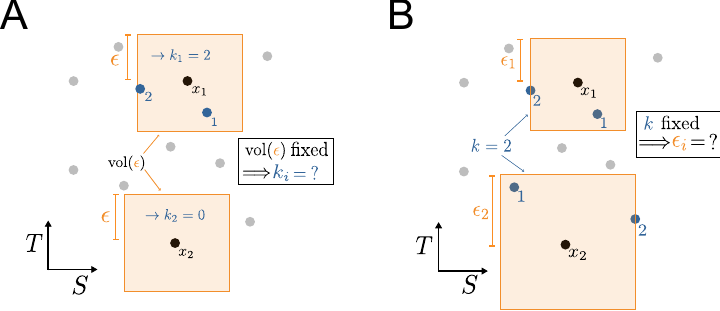}
    \caption{\textbf{Local densities can be estimated from data in two opposite ways.} \textbf{A} Kernel density estimation using a fixed volume $\mathrm{vol}(\epsilon)$ around the reference point $x_i$. \textbf{B} $k$-nearest-neighbor-based estimation using an adaptive search volume, determined by the distance to the $k$-th neighbor.}
    \label{fig:kde_vs_kNN}
\end{figure*}

While we established in \autoref{sec:analytical} how continuous $I^\mathrm{sx}_\cap$ can be computed from a given probability distribution, in most research settings this distribution is unknown. More commonly, one has access to only a finite sample drawn from this distribution and needs to \emph{estimate} the information-theoretic quantities of interest from this sample. For mutual information, several different approaches to this estimation problem have been discussed in literature, for instance, binning or kernel density estimation (see, e.g., \cite{hlavavckova} for a review). However the nearest-neighbour-based approach introduced by Kraskov, Stögbauer and Grassberger~\cite{kraskov} (KSG) has been widely adopted due to its favourable variance-bias characteristics. In the following, we explain how local probabilities can be estimated using nearest-neighbor techniques~\cite{kl} and introduce the key ideas behind the KSG estimator. Then, we show how these concepts can be built upon to derive an estimator for continuous $I^\mathrm{sx}_\cap$.

Many information-theoretic quantities like the mutual information (see \autoref{eq:sum_mi}) can be expressed as expectation values of \emph{local} quantities. For discrete data, a simple (but not optimal~\cite{nemenman2001entropy}) way to approximate these local values is to replace the probability masses by their frequencies in the sample, which is referred to as \textit{plug-in} estimation. In the continuous case, however, all samples are almost always unique, which is why no continuous analogue to the plug-in principle exists. Thus, one has to revert to more elaborate methods to estimate the local densities $f(x_i)$ necessary for computing the local information-theoretic values.

\paragraph{Nearest-neighbor-based estimation of local probability densities}

To estimate the local probability densities at the sample points, we exploit the intuitive idea that sample points are more likely to be closer together in regions of higher probability density. Note that this intuition crucially rests on the assumption that the sample is sufficiently large such that the true probability density varies only little between neighbouring data points. If this prerequisite is met, a natural approach to estimate the local probability density is to take a ball $B_\epsilon(x_i)$ of fixed radius $\epsilon$ around the evaluation point $x_i$, over which the probability density is assumed to be approximately constant, and count the number of neighbours $k_i$ within the ball (see \autoref{fig:kde_vs_kNN}.\textbf{A}). A point estimate for the probability density is then given by the fraction of sample points $(k_i + 1)/N$ within that $\epsilon$-ball divided by its volume $\mathrm{vol}(B_\epsilon(x_i))$ as 
\begin{equation}
    \hat f(x_i) = \frac{k_i + 1}{N \mathrm{vol}(B_\epsilon (x_i))} \, .
    \label{eq:p_estimate_kde}
\end{equation}

Kernel density estimators (KDE) use this principle, but typically weigh the neighbors by their respective distance using a fixed \emph{kernel}, which in general does not have to be the characteristic function of an $\epsilon$-neighborhood.

The disadvantage of approaches like this is that the same ball or kernel is used irrespective of the local density: In areas of low density there may be insufficiently many points within the neighborhood to arrive at a good estimation, while conversely in areas of high density the high numbers $k_i$ would have allowed for a smaller search region, partly relaxing the smoothness assumption on $f$. A way to get around these problems is to invert the procedure: Instead of fixing the search radius $\epsilon$ and counting the number of neighbouring points $k_i$, one can fix a natural number $k$ and compute the distance $\epsilon_i$ from point $x_i$ to its $k$-th nearest neighbor (see \autoref{fig:kde_vs_kNN}.\textbf{B}). Analogously to \autoref{eq:p_estimate_kde}, a point estimate of the local probability density can now be expressed as
\begin{equation}
    \hat f(x_i) = \frac{k + 1}{N \mathrm{vol}(B_{\epsilon_i} (x_i))} \, .
    \label{eq:p_estimate}
\end{equation}

\paragraph{The Kozachenko-Leonenko Estimator for Shannon Differential Entropy}
\citet{kl} used the nearest-neighbour-based approach to build an estimator for the Shannon differential entropy~\cite{coverthomas}
\begin{equation}
    \label{eq:h}
    H(X) =\mathbb{E}[-\log_2p(X)]=-\int \mathrm{d}x\, p(x)\log_2 p(x) \,.
\end{equation}
However, instead of using the point estimate of \autoref{eq:p_estimate}, the authors derive the expectation value for $\log p(x_i)$ for the probability mass of the $\epsilon$-ball 
\begin{equation*}
p(x_i) = \int_{B_{\epsilon_i}(x_i)}\mathrm{d}x' \, f(x') \approx \mathrm{vol}(B_{\epsilon_i}(x_i)) f(x_i)
\end{equation*}
given the number of neighbours $k$ for each sample point, which results in the expression
\begin{equation*}
    \widehat{\log p_X(x_i)} = \psi(k) - \psi(N) \, ,
\end{equation*}
where $\Psi$ represents the Digamma function.
Plugging this result into \autoref{eq:h}, the authors obtain an estimator for the Shannon differential entropy as (for a more formal derivation, refer to \autoref{apx:kl})
\begin{equation}
\begin{gathered}
\hat{H}(X)
= -\frac{1}{N} \sum\limits_{i=1}^N \widehat{\log\left[p_{X}(x_i)\right]}\\
= - \psi(k) + \psi(N) + \frac{1}{N} \sum\limits_{i=1}^N \log\left[\mathrm{vol}\left(B_{\epsilon_i}\right)\right] \, .
\end{gathered}
\label{eq:kl}
\end{equation}

\paragraph{The Kraskov-St{\"o}gbauer-Grassberger (KSG) Estimator for Mutual Information}
\begin{figure*}[ht]
    \centering
    \includegraphics[width=\textwidth]{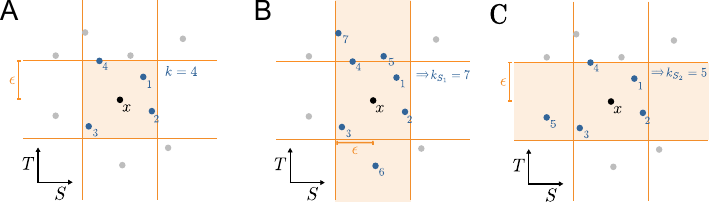}
    \caption{\textbf{The KSG estimator for mutual information~\cite{kraskov} works by considering $k$th nearest neighbors in the joint and marginal spaces in three steps.} \textbf{A} Determining search radius in joint space \textbf{B} counting neighbors in marginal space of $S$ and \textbf{C} counting neighbors in marginal space of $T$.}
    \label{fig:kraskov}
\end{figure*}
The most straightforward way to construct an estimator for continuous mutual information from the results of \citet{kl} is by using the identity $I(T:S) = H(T) + H(S) - H(T, S)$. However, in doing so, the biases of the individual entropy estimates likely do not cancel, which led \citet{kraskov} to take a different approach: First, like for the Kosachenko-Leonenko estimator, the distance $\epsilon_i$ of point $x_i = (t_i, s_i)$ to its $k$th nearest neighbour is determined in the joint space (see \autoref{fig:kraskov}.\textbf{A}). Subsequently, the number of neighbours $n_T$ which have a distance of less than $\epsilon_i$ in only the marginal space of the first variable $T$ is determined (see \autoref{fig:kraskov}.\textbf{B}), and the procedure is repeated analogously to determine the number of neighbors $n_S$ in radius $\epsilon_i$ in the marginal space of the second random variable $S$ (see \autoref{fig:kraskov}.\textbf{C}). The advantage of obtaining the marginal probability densities using the distance determined in the joint space is that---when using the maximum norm---the volume terms which appear in \autoref{eq:kl} cancel exactly, leading to the succinct expression
\begin{equation}
    \hat{I}(T:S) = \Psi(k) + \Psi(N) - \left<\Psi(n_T(i)) + \Psi(n_S(i))\right>_i \, ,
\end{equation}
for the estimated mutual information. Here, angled brackets $\left<\cdot\right>_i$ denote the average over the sample points.

Clearly, the estimated density as well as the information quantity assessed via the density estimation will depend on $k$. However, Kraskov et al. \cite{kraskov} have empirically studied the influence of $k$, or more exactly a fixed $\frac{k}{N}$ on the estimation result and concluded that in the limit of infinitely many sample points the exact choice of $k$ does not matter as all choices will converge. 

\subsection{Generalizing the KSG estimator for $I^{\mathrm{sx}}$}
\label{subsection:generalizing_kraskov}

\begin{figure*}
    \centering
    \includegraphics{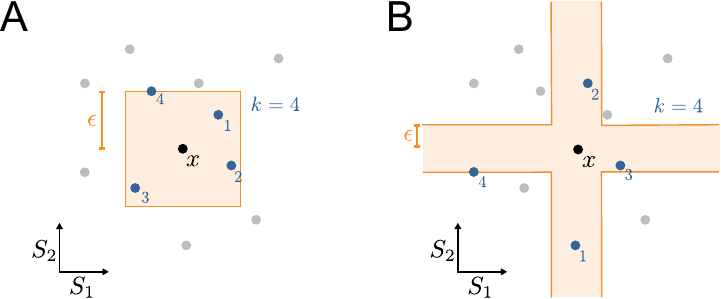}
    \caption{\textbf{To estimate local redundant information for a two-variable example, the search regions of the original mutual information KSG estimator~\cite{kraskov} need to be adapted.} \textbf{A} For mutual information estimation, the radius $\epsilon$ at a point $x$ is assessed in the joint space by searching for the minimal distance such that \emph{all} of the variables' marginal distances to $x$ is smaller or equal to $\epsilon$. \textbf{B} For redundancy estimation, on the other hand, $\epsilon$ is inferred by searching for the minimal distance such that \emph{at least one} of the variables' marginal distances to $x$ is smaller or equal to $\epsilon$.}
    \label{fig:ksg_vs_isx}
\end{figure*}

\paragraph{Adapting local probability density estimation for redundant information}
To understand how to estimate the continuous redundancy $I^\mathrm{sx}_\cap$ using an approach similar to \citet{kraskov}'s, first consider the case of estimating the mutual information $I(T:\bm S)$ between a single target random variable $T$ and two joint source variables $\bm S = (S_1, S_2)$ using the KSG estimator. In the first step, the radius $\epsilon_i$ is determined as the smallest radius in the joint space for which the $k$-th neighbor of the point $x_i = (t_i, s_{1, i}, s_{2, i})$ lies within the Ball $B_{\epsilon_i}(x_i)$. Using the maximum norm, this ball in the joint space is equal to the product of the $\epsilon$-balls in the marginal spaces, i.e.,  $B_{\epsilon_i}(t_i) \cross B_{\epsilon_i}(s_{1, i}) \cross  B_{\epsilon_i}(s_{2, i})$ (square region in \autoref{fig:ksg_vs_isx}.\textbf{A}). In the second step, neighbours are counted within the marginal space of $\bm S$, where only those points are counted which lie within less than the distance $\epsilon_i$ to $x_i$ in both dimensions $S_1$ \emph{and} $S_2$ simultaneously.

As laid out in detail in \autoref{sec:analytical}, the difference in the mathematical formulation between mutual information between a single target $T$ and two (or more) source variables $\bm S = (S_1, S_2)$ and $I^\mathrm{sx}_\cap$ redundancy is that while the former is expressed with conjunctions, which amount to \emph{intersections} of source events in sample space, the latter deals with disjunctions, which result in \emph{unions}. The KSG estimator can be readily adapted to this concept by changing the search regions accordingly: To compute the redundancy in the example case of two source variables introduced above, the radius $\epsilon_i$ now has to be determined not as the smallest radius laying on the intersection of $B_{\epsilon_i}(t_i)$, $B_{\epsilon_i}(s_{1, i})$ and $B_{\epsilon_i}(t_{2, i})$ (``square'' region in \autoref{fig:ksg_vs_isx}.A) but on the union of the source variable neighbourhoods intersected with the target ball $B_{\epsilon_i}(t_i) \cross B_{\epsilon_i}(s_{1, i}) \cross S_2 \cup B_{\epsilon_i}(t_i) \cross S_1 \cross B_{\epsilon_i}(s_{2, i})$ (``cross'' region in \autoref{fig:ksg_vs_isx}.\textbf{B}). Similarly, $n_S$ is determined by counting all points which lie within the union of marginal balls with radius $\epsilon$.

Note that, analogously to Kraskov's~\cite{kraskov} estimator for mutual information, this estimation procedure assumes the maximum norm when combining the unions between the source with the target variables in order for the volume terms of the KL estimator to cancel, which simplifies the analytical formulation considerably. Although the choice of any particular distance norm will matter for small sample sizes (e.g. Gao et al.\cite{gao2018demystifyingkNN} suggest using the Euclidean norm), the assumed continuity of the density function and all its marginals ensures that regardless of the choice of distance norm the estimation of the local densities \emph{asymptotically} converges to the same values. Intuitively, the average probability over decreasing neighbourhoods of a point converges to the density at that point in the limit of infinitely many samples, regardless of the exact choice of the shape of the neighbourhood, the shape being determined by the distance norm. Technically, however, these neighbourhoods must not be chosen entirely arbitrarily--the shrinking procedure obey the \emph{nicely shrinking} \cite{rudin86real} property, guaranteeing that the shrinking neighbourhoods indeed cover neighbourhoods in each dimension of the point. For the nicely shrinking property to hold, one can choose any distance norm as all imply the nicely shrinking property. This asymptotic convergence behavior of the estimated densities then directly implies the asymptotic convergence of the estimate of the redundant information to its analytical value.

\paragraph{Nearest-neighbor-based estimator for continuous redundant information}
Adapting the steps by \citet{kraskov} using the described estimation procedure for the probability density of a logical disjunction we find
\begin{equation}
\begin{gathered}
    \hat{I}^{sx}_\cap(T: X, Y)
    = \langle\widehat{\log[f^i_{T, X \lor Y}(t, x, y)]}\rangle_i\\
    - \langle\widehat{\log[f^i_{X \lor Y}(x, y)]}\rangle_i - \langle\widehat{\log[f^i_T(t)]}\rangle_i \\
    = \psi(k) + \psi(N) - \langle \psi(n_{X \lor Y}(i))\rangle_i - \langle \psi(n_T(i))\rangle_i \, .
\end{gathered}
\label{eq:exponential_expansion_estimation}
\end{equation}

More precisely, we have retaken the approach of choosing the $\epsilon$ to be determined by a nearest neighbor search in the joint space. We then use the same $\epsilon$ in the marginal spaces themselves, such that the volume terms cancel exactly. The search with the predetermined $\epsilon$ will then cause an adapted number of nearest neighbors in the marginal spaces, denoted by $n_{X \lor Y}(i)$ and $n_T(i)$, respectively.
The exact steps leading to \cref{eq:exponential_expansion_estimation} are outlined in \cref{appendix_b_estimation_steps}. Furthermore, details of the implementation and code availability are described in \cref{apx:implementation}.

Note, however, one subtle difference in the meaning of the parameter $\epsilon$ between the mutual information and redundancy case: While scaling $\epsilon$ by a constant factor for one but not the other source variable changes the estimated mutual information for any finite sample, it does not change what the mutual information estimator converges to in the limit of infinite samples. This is different in the case of redundancy: Because of the addition of marginal densities in \cref{definition:density_or_two_surces}, scaling one but not the other search radius $\epsilon$ makes a difference also in the limit of infinite samples. Using the same $\epsilon$ for both source variables here thus reflects the choice of treating the source variables on ``equal footing''. This train of thought aligns with the dependence of the result on the chosen distance function, as in many cases a transformation of one of the variables can be reflected with a change in the utilized norm.

For a fixed small $k$, the specific value of $k$ does not influence the estimation results if sufficient samples $N$ are provided. Analogously to the KSG estimator for mutual information, the estimation results only depend on the fraction $k/N$, as a fixed fraction $k/N$ hinders the estimator in resolving high-frequency fluctuations in the probability density. Fluctuations on all length scales are resolved in the limit $k/N \to 0$, for which the estimator converges to the analytical solutions (see~\cref{fig:comparison_k} and \cref{fig:comparison_kN} in \cref{apx:k}). Following the suggestions from \citet{kraskov}, we typically recommend $k$ to be set to $2$, $3$ or $4$, where estimates with higher $k$ have smaller variance due to the stochasticity of the distances neighbours averaging out, but also higher bias due to their inability to resolve small-scale features.

\subsection{Convergence of the estimator for simple toy examples}

\begin{figure}[h!]
    \centering
    \includegraphics[width=\columnwidth]{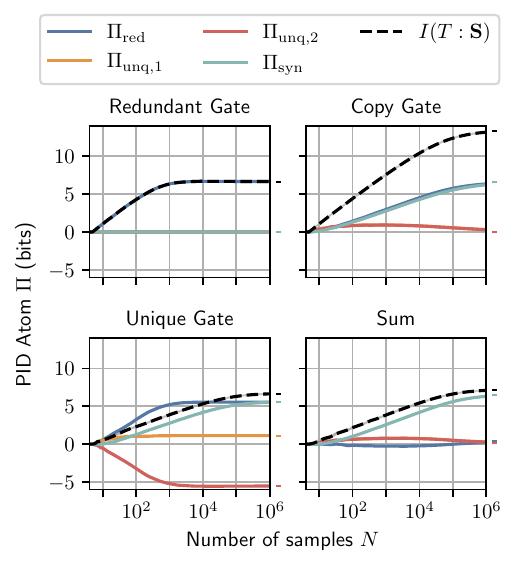}
    \caption{\textbf{With increasing sample size $N$, the estimated PID atoms converge to the numerically evaluated analytical results.} The subfigures show the four PID atoms $\Pi$ and the mutual information $I(T:\bm{S})$ for the continuous redundant, unique and copy gates as well as the sum example estimated using the nearest-neighbors approach outlined above with $k=4$. The colored bars to the right of the graphs correspond to the respective analytical value for each atom and gate from \cref{tab:analytical_examples}.}
    \label{fig:convergence}
\end{figure}

In order to assess the efficacy of our estimation procedure, we applied it to the toy examples outlined in \cref{sec:analytical}. To this end, different numbers of samples have been drawn from the known probability distributions and have been plugged into the estimator to check for convergence.

\cref{fig:convergence} shows how the PID atoms converge with more samples for the four example cases. The copy gate and sum example converge slower than the other values. This behaviour is inherited from the KSG estimator, which also converges more slowly for higher-dimensional distributions and distributions with higher mutual information. All in all, we conclude that the estimator succeeds in reproducing the numerically evaluated analytical results from \cref{sec:analytical} given sufficiently many samples.

\section{Definition of multivariate continuous $I^\mathrm{sx}_\cap$}
\label{sec:nd_pid}
\subsection{The multivariate PID Problem}
\label{sec:nd_pid_lattice}

In the previous sections we introduced the continuous $I^\mathrm{sx}_\cap$ measure for two source variables. However, as was already mentioned, the PID framework as envisioned by \citet{williams2010} is more general and in principle applicable to an arbitrary number of sources. In the following, we will explain the PID problem for the general, multivariate case and generalize the continuous $I^\mathrm{sx}_\cap$ measure accordingly. This section may be skipped if the reader has a particular application in mind for which two source variables suffice and this section is found too technical.

\begin{figure*}[t]
    \centering
    \includegraphics[width=0.7\textwidth]{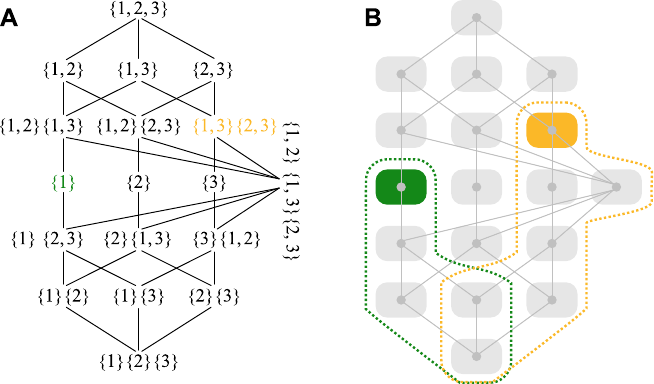}
    \caption{
        \textbf{The PID lattice structure for three source variables reveals how generalized redundancies can be dissected into PID atoms~\cite{williams2010,gutknecht2021}.}
        \textbf{A} Every node in the redundancy lattice represents a generalized redundancy $I_\cap$ for a given antichain, while edges indicate parthood of generalized redundant information in information terms higher up the lattice. For instance, the mutual information $I(T:S_1)$, which corresponds to the self-redundancy with the antichain $\{1\}$, contains all redundancies in its downset, i.e., those with antichains $\{1\}\{2, 3\}$, $\{1\}\{2\}$, $\{1\}\{3\}$ and $\{1\}\{2\}\{3\}$, which, however, themselves overlap and contain parts of each other.
        \textbf{B} The PID atoms (shaded boxes), which are referenced by the same antichains, only comprise the information that the redundancy of the same antichain contains \emph{in excess of what the all the atoms below contain together}. Atoms thus reflect the incremental increases of the redundancies which conversely can be used to construct the generalized redundancies. For instance, the redundancy with antichain $\{1\}$ is the sum of the four atoms in its downset plus the atom $\{1\}$ itself (green shaded box and green dashed outline, respectively). Analogously, the figure shows how the redundancy with the antichain $\{1, 3\}\{2, 3\}$ (orange) is constructed from the atoms in its downset.
    }  
    \label{fig:lattice}
\end{figure*}

Before we generalize to an arbitrary number of source variables, note that already in the bivariate case, each atom can be uniquely identified by which mutual information terms it contributes to via the consistency equations (see \cref{eq:consistency}). For instance, the unique atom $\Pi_\mathrm{unq,1}$ is the only atom that is part of both $I(T:S_1)$ and $I(T:\bm S)$ but not $I(T:\bm S_2)$ while the synergistic information atom $\Pi_\mathrm{syn}$ is precisely the one atom whose information is contained in neither $I(T:S_1)$ nor in $I(T:S_2)$, but only in $I(T: S_1, S_2)$. In the following, we show how this notion of parthood, also called \emph{mereology}, can serve as a general framework for identifying and ordering PID atoms for arbitrarily many source variables following the work by~\citet{gutknecht2021}.

Mathematically, the mereological relations between an atom and the marginal mutual information terms can be captured by a \emph{parthood distribution} $\Phi: \mathcal{P}(\bm S) \rightarrow \{0, 1\}$: A boolean function defined on the power set of source variables $\mathcal{P}(\bm S)$ which is equal to $1$ for exactly those sets of sources $\bm S_{\bm a} = \{S_i \mid i \in \bm a \}$ for which the atom $\Pi_\Phi$ is part of the marginal mutual information $I(T:\bm S_{\bm a})$~\cite{gutknecht2021}. Note that not all boolean functions constitute valid parthood distributions: Since all information about $T$ that is contained in the mutual information $I(T:\bm S_{\bm a})$ with a set of sources $\bm S_{\bm a}$ must naturally also be contained in the mutual information $I(T:\bm S_{\bm b})$ with any superset $\bm S_{\bm b} \supset \bm S_{\bm a}$, $\Phi$ has to be \emph{monotone}, in the sense that $\bm S_{\bm b} \supset \bm S_{\bm a} \Rightarrow (\Phi(\bm S_{\bm a}) = 1 \to \Phi(\bm S_{\bm b}) = 1)$. For instance, any information provided by source $S_1$ alone must naturally also be present when taking sources $S_1$ and $S_2$ together. Furthermore, we must always require that $\Phi(\emptyset) = 0$ and $\Phi(\bm S) = 1$, indicating that no atom is part of the mutual information with the empty set and, conversely, all information atoms are part of the full joint mutual information $I(T:\bm S)$.

The marginal mutual information terms can be constructed additively from these PID atoms according to the generalized consistency equations
\begin{equation}
    I(T:\bm S_{\bm a}) = \sum_{\Phi : \Phi(\bm S_{\bm a}) = 1}\Pi_\Phi \, .
    \label{eq:marginal_mi}
\end{equation}

The number of such parthood distributions, and thus of the PID atoms, grows superexponentially like the Dedekind numbers \cite{gutknecht2021}. On the other hand, there are only $2^n-1$ classical mutual information quantities providing constraints through \cref{eq:marginal_mi}, leaving these consistency equations underdetermined and making it impossible to determine the sizes of the PID atoms. This underdeterminedness is typically resolved by introducing a measure for generalized redundancy $I_\cap(T:\bm S_\Phi)$ for each parthood distribution $\Phi$. This redundancy generalizes the mutual information and fulfills

\begin{equation}
    I_\cap(T:\bm S_\Phi) = \sum_{\Psi: \Psi \prec \Phi} \Pi_\Psi \, ,
    \label{eq:redundancy_from_atoms}
\end{equation}
where the sum is over all atoms $\Pi_\Psi$ with a parthood distribution $\Psi$ which are part of the generalized redundancy $I_\cap(T:\bm S_\Phi)$.

This condition is similar to the monotonicity requirement of the mutual information mentioned before and is mathematically captured in the ordering relation $\prec$ given by
\begin{equation*}
\Psi \prec \Phi \Leftrightarrow \Phi(\bm S_{\bm a}) = 1 \to \Psi(\bm S_{\bm a}) = 1 \text{ for all } \bm S_{\bm a} \subseteq \bm S \, .
\end{equation*}
This partial ordering bestows a structure onto the set of PID atoms which is known as the ``redundancy lattice''\cite{williams2010, gutknecht2021} (see \cref{fig:lattice}).

Comparing \cref{eq:marginal_mi,eq:redundancy_from_atoms}, one finds that the mutual information can be interpreted as a \enquote{self-redundancy}---a special case of a redundancy which describes how the mutual information provided by a set of sources about the target is trivially redundant with itself. Finally, to quantify the atoms $\Pi_\Phi$ given the redundancies $I_\cap(T:\bm S_\Phi)$, \cref{eq:redundancy_from_atoms} can be inverted, which, due to the partial ordering, is known as a \emph{Moebius inversion}~\cite{williams2010}.

Equivalently, each parthood distribution can be uniquely identified by the set of sets of source variables for which the parthood distribution is equal to $1$, i.e., $\Phi^{-1}[\{1\}]$. Because of the monotonicity requirement mentioned before, we can make this representation even more concise by removing all sets which are supersets of others, which gives the so-called \emph{antichains}
\begin{equation}
    \alpha = \left\{\bm a \in  \Phi^{-1}[\{1\}] \mid \not\exists\, \bm b \in  \Phi^{-1}[\{1\}] \text{ s.t. } \bm b \subseteq \bm a \right\} \, ,
\end{equation}
which is the most prevalent way how PID atoms are referred to throughout literature and also the formalism that PID was originally conceived in~\cite{williams2010}. Throughout this paper, the outer braces of the set of sets are neglected for the sake of brevity (e.g., $\{1\},\{2\}$ instead of $\{\{1\},\{2\}\}$. The term antichain stems from the fact that the sets contained within $\alpha$ are incomparable with respect to the partial order of set inclusion.
These antichains make the intuitive meaning of the redundancy terms explicit, e.g., $I_{\cap, \{1\}\{2, 3\}}$ refers to the information that can either be obtained from source $S_1$ alone, or--redundantly to that--from sources $S_2$ and $S_3$ taken together.

\subsection{Definition of multivariate continuous $I^\mathrm{sx}_\cap$}
\label{sec:multivariate_sxpid}
Since in the multivariate case the generalized redundancy $I_{\cap, \alpha}$ refers, for a particular antichain $\alpha$, to information that can equivalently be obtained from all sets $\bm S_{\bm a}$ for $\bm a \in \alpha$ of source variables taken together, the logical statement in the conditional probability of \cref{eq:bivariate_local_sxpid} needs to be generalized to accommodate for such general disjunctions as
\begin{gather}
    \label{eq:local_sxpid}
    i_{\cap, \alpha}^\mathrm{sx}(t: s_1; s_2) = \log_2\left[\frac{p(t | \bigvee_{\bm a \in \alpha} \bm S_{\bm a} = \bm s_{\bm a})}{p(t)}\right]\\
    = \log_2\left[\frac{p(t | \bigvee_{\bm a \in \alpha} \bigwedge_{i \in \bm a} S_i = s_i)}{p(t)}\right] \, ,
\end{gather}
where $\bm S_{\bm a} = \{S_i \mid i \in \bm a\}$ and likewise $\bm s_{\bm a} = \{s_i \mid i \in \bm a\}$. This generalization is analogous to how the multivariate redundancy is introduced for the discrete case by \citet{makkeh}.

\subsection{Estimation of multivariate continuous $I^\mathrm{sx}_\cap$}\label{sec:estimator_multivariate} To estimate multivariate continuous $I^\mathrm{sx}_\cap$, the search regions used by the nearest neighbor estimator need to be expanded to higher dimensions. For $d>2$ source variables $\bm S = (S_1, \dots, S_d)$ and a specific antichain $\alpha$, the search space for determining the radius $\epsilon_i$ is given by the union of intersections 
\begin{equation}
    \bigcup_{\bm a \in \alpha} B_{\epsilon_i}(t_i) \times \bigtimes_{j \in \bm a} B_{\epsilon_i}(s_{i, j}) \times \bigtimes_{j \notin \bm a} S_j \, .
\end{equation}

For illustration, the search regions for three source variables (trivariate PID) are visualized in \cref{fig:trivariate_search_regions} for all antichains in the redundancy lattice.

\makeatletter\onecolumngrid@push\makeatother
\begin{figure*}
    \begin{minipage}[b]{0.67\textwidth}
        \includegraphics[width=\textwidth]{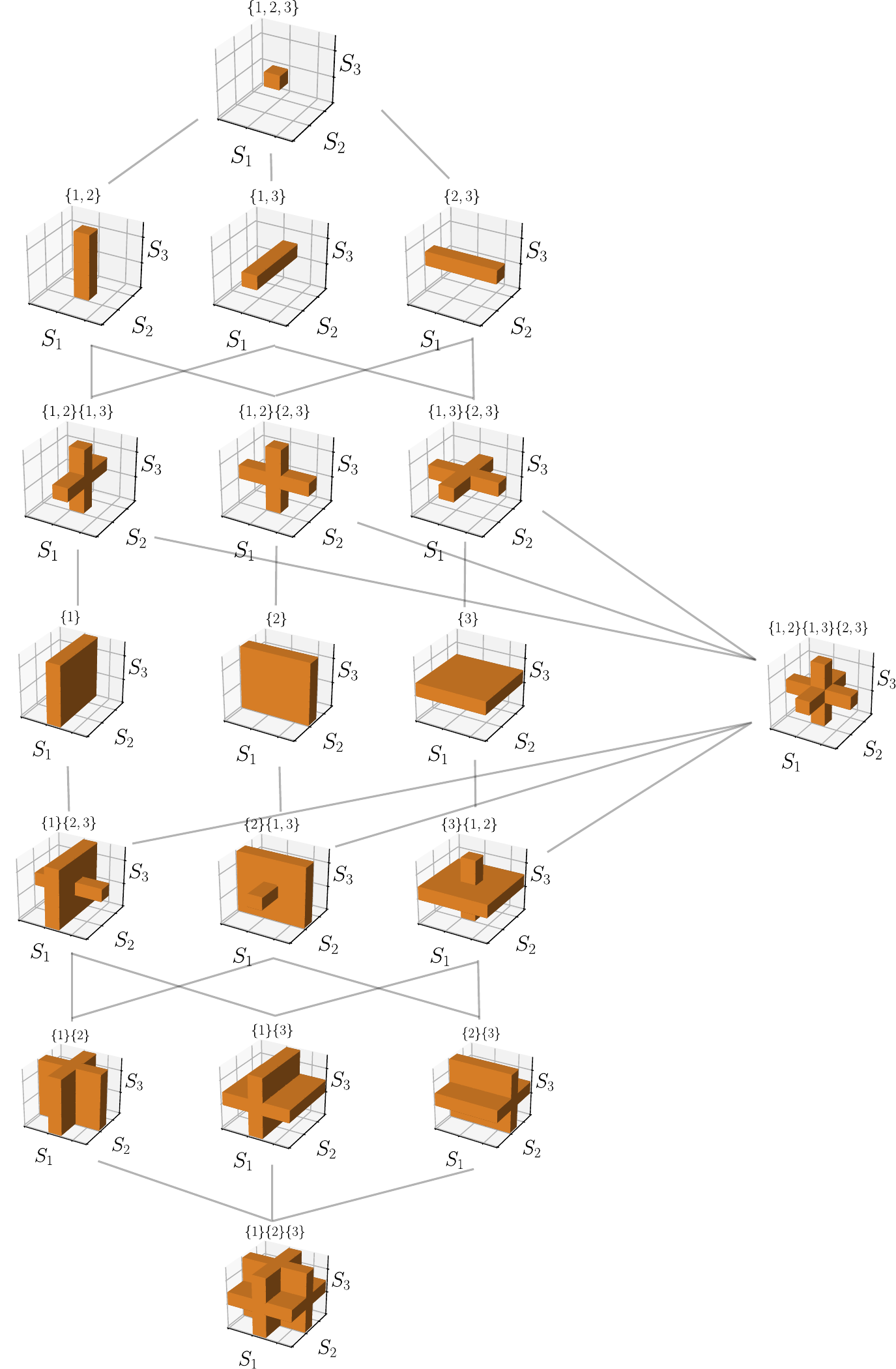}
    \end{minipage}
     \caption{\textbf{The nearest neighbor search regions for three source variables are cumulative when going from the full synergy (top) towards the full redundancy (bottom) in the redundancy lattice.} The orange regions depict where to search for nearest neighbors, and its border signifies the surfaces of equal distance from the logical statement corresponding to each antichain. Here the finite heights, widths and depths $\epsilon$ are chosen equally large. Also note that the search regions for all classical mutual information terms (i.e., antichains $\{1, 2, 3\}$, $\{1, 2\}$, $\{1, 3\}$, $\{2, 3\}$, $\{1\}$, $\{2\}$ and $\{3\}$) correspond to marginalized hypercubes, just as in the regular definition of the KSG estimator~\cite{kraskov}.}
    \label{fig:trivariate_search_regions}
\end{figure*}
\makeatletter\onecolumngrid@pop\makeatother

Using this notion of neighborhood, the estimator for continuous $I^\mathrm{sx}_\cap$ can be generalized to arbitrary antichains $\alpha$ as in \cref{definition_1}, yielding

\begin{equation*}
    \begin{gathered}
        \hat{I}^{sx}_\cap(T: \alpha)\\
        = \langle\widehat{\log[f^i_{\alpha, T}(t, s_k)]}\rangle_i - \langle\widehat{\log[f^i_\alpha(s_k)]}\rangle_i - \langle\widehat{\log[f^i_T(t)]}\rangle_i \nonumber \\
        = \psi(k) + \psi(N) - \langle \psi(n_\alpha(i))\rangle_i - \langle \psi(n_T(i))\rangle_i \, .
    \end{gathered}
\end{equation*}

\section{Application of continuous $I^{\mathrm{sx}}_\cap$ estimator to simulated data from an energy management system\label{sec:application}}

\subsection{Simulation of an energy management system}

\begin{figure*}[tbh]
\centering
\begin{tikzpicture}

\node[inner sep=0pt, label={[align=center]south:Net Energy Transfer\\to External Grid}] (exgrid) at (5.5,0)
    {\includegraphics[width=.1\textwidth]{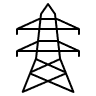}};

\node [fill=black!10,align=center,anchor=west] at ([yshift=0.3cm]exgrid.south east) {$P_\mathrm{Grid}$ ($T$)};

\node[inner sep=0pt, label={[align=center]south:Wind\\Turbine}] (wind) at (1,3)
    {\includegraphics[width=.1\textwidth]{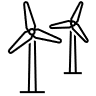}};

\node[inner sep=0pt, label={[align=center]south:Weather\\Data}] (weather) at (5.5,5)
    {\includegraphics[width=.1\textwidth]{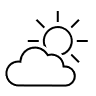}};

\node [fill=black!10,align=left,anchor=west] at (weather.north east) {Ambient temparature $T_A$ ($S_1$)\\Wind speed $v_\mathrm{Wind}$ ($S_2$)\\Time of day $t_\mathrm{Day}$ ($S_3$)};

\node[inner sep=0pt, label={[align=center]south:PV\\System}] (pv) at (4,3)
    {\includegraphics[width=.1\textwidth]{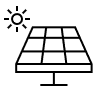}};

\node[inner sep=0pt, label={[align=center]south:Building Power\\Consumption}] (building) at (10,3)
    {\includegraphics[width=.1\textwidth]{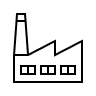}};


\node[inner sep=0pt, label={[align=center]south:Combined Heat\\And Power Plant}] (chp) at (7,3)
    {\includegraphics[width=.1\textwidth]{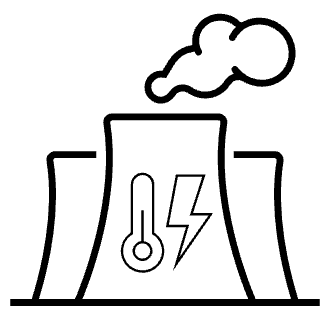}};
    
\node[inner sep=0pt, label={[align=center]south:Battery Storage\\And Controller}] (battery) at (10.6,0)
    {\includegraphics[width=.1\textwidth]{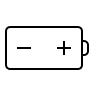}};

\draw [thick, >=latex, ->] (weather.west) -| (wind);  
\draw [thick, >=latex, ->] ([yshift=-0.2cm]weather.west) -| (pv);  
\draw [thick, >=latex, ->] ([yshift=-0.2cm]weather.east) -| (chp);  
\draw [thick, >=latex, ->] (weather.east) -| (building);  

\draw [thick, >=latex, ->] ([yshift=-1.0cm]wind.south) |- (exgrid.west);  
\draw [thick, >=latex, ->] ([yshift=-1.0cm]pv.south) |- ([yshift=0.2cm]exgrid.west);  

\draw [thick, >=latex, ->] ([yshift=-1.0cm]chp.south) |- ([yshift=0.4cm]exgrid.east); 
\draw [thick, >=latex, ->] ([yshift=-1.0cm, xshift=-0.8cm]building.south) |- ([yshift=0.2cm]exgrid.east); 
\draw [thick, >=latex, ->] (battery.west) |- (exgrid); 

\draw [thick, >=latex, ->] ([xshift=0.6cm, yshift=-1.0cm]building.south) -- ([yshift=+0.5cm]battery.center);
\draw [thick, >=latex, ->] (building.west) -- (chp.east);

\end{tikzpicture}
\caption[Energy management system]{
\textbf{Causal structure of the energy management system model.} An arrow from A to B reads as ``A influences B''.
The model simulates the building power and heat demand based on time and weather conditions.
Energy is provided by the grid connection, a PV-system, a combined heat and power plant (CHP), a wind turbine, and a stationary battery.
The battery's charging and discharging behavior is controlled depending on the overall power consumption level.
The CHP produces energy depending on the overall power consumption and the ambient temperature. The gray boxes highlight the choices for source ($S_i$) and target ($T$) variables for the PID analysis.
}
\label{fig:energy_system}
\end{figure*}

To demonstrate the applicability of the proposed definition for continuous $I^\mathrm{sx}_\cap$ and its corresponding estimator, we applied it to simulated data for an energy management system and show how the simulated relationships between system components can be recovered from the estimated PID. In a hybrid simulation approach, we integrated real-world sensor readings from a Honda R\&D facility into simulation components, which were in turn modelled via fundamental physical differential equations~\cite{Unger2012}. We informed our simulation by recordings of the actual facility energy consumption profile and local weather patterns, obtained by smart meter measurement. Specifically, the measured time series consist of all environmental inputs such as temperature, wind speeds, and humidity, as well as the power consumption of subsystems of the Honda R\&D facility, while all other variables were simulated. The simulations have been done in the Modelica simulation language, using the commercial tool SimulationX~\cite{simulationx}.

The simulation comprises multiple different agents representing energy sources, energy storage, consumers and grid interconnections. The energy originates from four sources: An on-premise wind turbine, a photovoltaic (PV) system, a combined heat and power natural gas plant (CHP), as well as energy transfers from the electricity grid. This energy is utilized as building power consumption, while excess power is stored in the battery storage system or fed back into the external power grid (see \cref{fig:energy_system}).

In the simulation, the different agents of the system are given set rules for how they respond to each other or to external changes. Firstly, as the CHP not only provides electrical energy to the facility but also heating, it is set to operate only during times in which the ambient temperature is equal to or lower than \SI{16}{\celsius}. During those times, the plant produces a continuous electric power output depending on the overall energy consumption of the facility (see \cref{alg:chp_controller}). The stationary battery is either charged or discharged depending on the overall power consumption level of the facility (see \cref{alg:battery_controller}). Specifically, the storage system is set to \texttt{Discharge} according to a binary hysteresis with respective threshold values of \SI{330}{kW} and \SI{350}{kW} for the overall power consumption (see \cref{fig:energy_hysteresis}).
The model was simulated for a full year to capture seasonal variations in both renewable energy availability and power demand. 

\begin{algorithm}
    \uIf{$\mathrm{Temperature} \leq \SI{16}{\celsius}$}{
        $\mathrm{PowerOutput} \gets 0.4 \; \mathrm{PowerReference}$
    }
    \Else{
        $\mathrm{PowerOutput} \gets 0$
    }
    \caption{CHP controller}
    \label{alg:chp_controller}
    \vspace{-0.1cm}
\end{algorithm}

\begin{algorithm}
    \uIf{$\mathrm{Discharge}$}{
        $\mathrm{PowerOutput} \gets -200\,000$
    }
    \uElseIf{$\mathrm{PowerReference} \leq 200\,000$}{
        $\mathrm{PowerOutput} \gets 100\,000$
    }
    \Else{
        $\mathrm{PowerOutput} \gets 100\,000$
    }
    \caption{Battery controller}
    \label{alg:battery_controller}
\end{algorithm}

\begin{figure}
    \centering
    \includegraphics[width=0.5\textwidth]{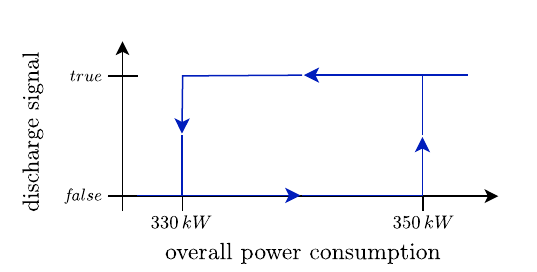}
    \caption{\textbf{Hysteresis function of the battery controller.} Depending on the overall power consumption level $u$, a discharge signal $y$ is sent to the controller.}
    \label{fig:energy_hysteresis}
\end{figure}

The goal of the PID analysis is to find the environmental variables which best predict the net energy transfer to the power grid $P_\mathrm{Grid}$. Specifically, we performed analyses between the net energy transfer as the target variable and used the ambient temperature, $T_\mathrm{A}$, wind speed, $v_\mathrm{Wind}$, and time of day, $t_\mathrm{Day}$, as sources. A full PID between these three source variables reveals how the predictive capabilities of the environmental variables is distributed among the individual factors.

\subsection{Results of experiments}

\begin{table}
\vspace{-0.35cm}
    \begin{ruledtabular}
    \caption{\textbf{Most information that the ambient temperature ($S_1=T_A$) and the wind speed ($S_2=v_\mathrm{Wind}$) hold about the net energy transfer ($P_\mathrm{Grid}$) is held synergistically.} The PID is calculated considering the target variable to be the net energy transfer to the electricity grid. The estimator was applied to $N=10^5$ data points.}
            \begin{tabular}{cc}
        &  Size of information atom
        \\
        \hline
        \cellcolor{syn}$\Pi_\mathrm{syn}$      & $1.363    \, \mathrm{bits}$
        \\
        \cellcolor{unq1}$\Pi_{\mathrm{unq},1}$ & $0.296    \, \mathrm{bits}$
        \\
        \cellcolor{unq2}$\Pi_{\mathrm{unq},2}$ & $0.034    \, \mathrm{bits}$
        \\
        \cellcolor{redu}$\Pi_\mathrm{red}$    & $0.630 \, \mathrm{bits}$
        \\
        $I(T:S_1, S_2)$ & $2.323 \, \mathrm{bits}$
        \label{tab:honda_2var}
    \end{tabular}
    \end{ruledtabular}
\end{table}

\subsubsection{Bivariate PID results}
As a first step, we analyzed how well the net energy transfer $P_\mathrm{Grid}$ is predictable from the ambient temperature $T_A$ and the wind speed $v_\mathrm{Wind}$. The results of this bivariate PID are summarized in \cref{tab:honda_2var}. In total, $2.323 \, \mathrm{bits}$ of information about $P_\mathrm{Grid}$ can be obtained by observing $T_A$ and $v_\mathrm{Wind}$. Intuitively, if all values for $P_\mathrm{Grid}$ were equally likely, one bit of information would be equivalent to restricting the range of possible values for the energy transfer by half after observation of the environmental variables. Accordingly, the observed $2.323 \, \mathrm{bits}$ would then equate to a reduction of range of the possible outcomes by a factor of about $1/5$.

Most of this information, namely $1.363 \, \mathrm{bits}$ are carried synergistically between the two environmental variables, meaning that this information is not accessible from any individual source alone. This implies that the net energy transfer is, to a large extent, predictable only from the interaction between the temperature and wind speed. Some of the information can also be determined redundantly from both sources ($\Pi_{\mathrm{red}} = 0.630 \, \mathrm{bits}$) and finally, some information can be gained from the ambient temperature alone ($\Pi_{\mathrm{unq},1} = 0.296 \, \mathrm{bits}$), while the wind speed conveys hardly any information exclusively by itself ($\Pi_{\mathrm{unq},2} = 0.034 \, \mathrm{bits}$). 

Note the strong similarity in values to the sum gate \cref{tab:analytical_examples}, which constructs the target as a net sum of its two source inputs. One can assume that this results from the dependency on the ambient temperature in the building power consumption and the CHP controller~\cref{alg:chp_controller} and the dependency of the energy generation of the wind turbine on the wind speed leading to these two source variables' effects would \emph{add up} together to inform about the net energy transfer to the external grid. However, the inequality in effects on other parts of the system such as potential temperature and wind speed dependency of for instance the photovoltaic system lead to a less balanced summation then was done in the equiweighted sum gate, and becomes conceptually more similar to a skewed sum gate as in \cref{tab:skewed_sum} in \cref{apx:skewed_sum}.

\subsubsection{Trivariate PID results}

\begin{figure}
    \centering
    \includegraphics{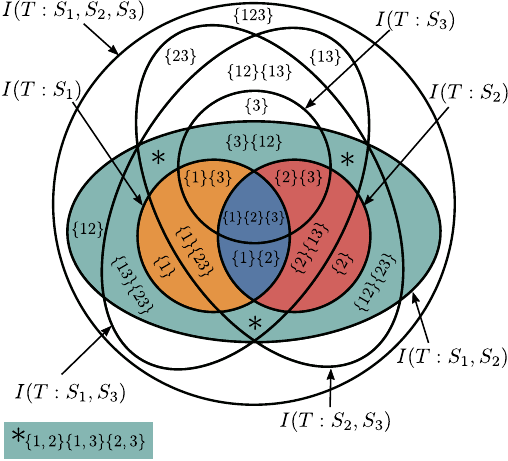}
    \caption{\textbf{By adding a third source variable $S_3$, the four PID atoms between the target $T$ and the sources $S_1$ and $S_2$ of \cref{fig:pid} are further subdivided and new ones are added, resulting in a total of $16$ atoms~\cite{williams2010}.} The individual atoms are identified by their antichains, e.g., $\{1\}\{2,3\}$ is the redundancy between source $S_1$ taken alone and sources $S_2$ and $S_3$ synergistically taken together.}
    \label{fig:trivariate-from-bivariate}
\end{figure}

A possible common cause for the correlation between the environmental variables and the net energy transfer from or to the power grid might be the time of day. To investigate this hypothesis, we augmented our analysis by incorporating the time of day as a third source variable in our PID analysis. By adding a third variable, the bivariate PID atoms themselves are dissected into smaller parts depending on their relation with the third variable (see \cref{fig:trivariate-from-bivariate}). For instance, the bivariate synergy $\Pi_\mathrm{syn}$ splits into five parts upon the introduction of the third variable, which are each characterized by a different antichain $\alpha$:

\begin{itemize}
    \item The part of the synergy between only the sources $S_1$ and $S_2$ which is \textbf{not} part of $S_3$ ($\alpha = \{1, 2\}$),
    \item the part of the synergy between $S_1$ and $S_2$ which is redundant with $S_3$ ($\alpha = \{1, 2\}\{3\}$),
    \item the part of the synergy between $S_1$ and $S_2$ which is redundant with the synergy between $S_1$ and $S_3$ ($\alpha = \{1, 2\}\{1, 3\}$),
    \item the part of the synergy between $S_1$ and $S_2$ which is redundant with the synergy between $S_2$ and $S_3$ ($\alpha = \{1, 2\}\{2, 3\}$) and finally
    \item the part of the synergy between $S_1$ and $S_2$ which is redundant with both the synergy between $S_1$ and $S_3$ and the synergy between $S_2$ and $S_3$ ($\alpha = \{1, 2\}\{1, 3\}\{2, 3\}$).
\end{itemize}

\begin{table*}
    \centering
    \caption{\textbf{The three source PID of the Honda energy management system between the ambient temperature ($S_1 = T_A$), the wind speed ($S_2=v_\mathrm{Wind}$), and the time of day ($S_3= t_\mathrm{Day}$) with the net energy transfer to an external grid ($T=P_\text{grid}$) as target variable reveals that most information that the sources carry about the target is redundant with the time of day.} The bivariate PID atoms from \cref{tab:honda_2var} are subdivided into trivariate atoms according to \cref{fig:trivariate-from-bivariate}, and the estimated information carried by each atom $\widehat{\Pi}(T:\alpha)$ as well as the cumulative atoms, i.e., the corresponding estimated redundant information $\widehat{I^\mathrm{sx}_\cap}(T:\alpha)$ are shown. The estimator was applied to the same $N=10^5$ data points as in \cref{tab:honda_2var}. The total mutual information is $I(T:S_1, S_2, S_3) = 3.092$ bits, agreeing with the top-most redundancy in the lattice. The bold rows each represent the most informative trivariate atom from the decomposition of the corresponding bivariate atom. Note that all bold rows contain an atom which is in some way a redundancy with source variable three.}
    \begin{ruledtabular}
    \begin{tabular}{cccc}
        &&\multicolumn{2}{c}{Estimation results}\\
        
        Antichain $\alpha$ of trivariate atom&Contained in bivariate atom & $\widehat{\Pi}(T:\alpha)$ & $\widehat{I^\mathrm{sx}_\cap}(T:\alpha)$\\
        \hline
        $\{1, 2, 3\}$ & - & $0.429 \, \mathrm{bits}$ & $3.092 \, \mathrm{bits}$ \\
        $\{1, 3\}$ & - & $0.688 \, \mathrm{bits}$ & $2.440 \, \mathrm{bits}$\\
        $\{2, 3\}$ & - &  $0.223 \, \mathrm{bits}$ & $1.971 \, \mathrm{bits}$\\
        $\{1, 3\} \{2, 3\}$ & - & $0.177 \, \mathrm{bits}$  & $1.742 \, \mathrm{bits}$\\
        $\{3\}$ & - & $-0.749 \, \mathrm{bits}$ & $1.239 \, \mathrm{bits}$\\
        
        $\{1, 2\}$ & \cellcolor{syn}$\Pi_\mathrm{syn}$ & $0.000 \, \mathrm{bits}$ & $2.330 \, \mathrm{bits}$\\
        $\{1, 2\} \{1, 3\}$ & \cellcolor{syn}$\Pi_\mathrm{syn}$ &  $0.013 \, \mathrm{bits}$  & $2.324 \, \mathrm{bits}$\\
        $\{1, 2\} \{2, 3\}$ & \cellcolor{syn}$\Pi_\mathrm{syn}$ & $0.006 \, \mathrm{bits}$  & $2.320 \, \mathrm{bits}$\\
        $\mathbf{\{3\} \{1, 2\}}$ & \cellcolor{syn}$\mathbf{\Pi_\textbf{syn}}$ & $\mathbf{1.053 \, \textbf{bits}}$ & $1.988 \, \mathrm{bits}$\\
        $\{1, 2\} \{1, 3\} \{2, 3\}$ & \cellcolor{syn}$\Pi_\mathrm{syn}$ &$0.29 \, \mathrm{bits}$ & $2.314 \, \mathrm{bits}$ \\

        $\{1\}$ & \cellcolor{unq1}$\Pi_{\mathrm{unq},1}$ &  $-0.003 \, \mathrm{bits}$ & $0.927 \, \mathrm{bits}$\\
        $\{1\} \{2, 3\}$ & \cellcolor{unq1}$\Pi_{\mathrm{unq},1}$ & $0.022 \, \mathrm{bits}$ & $0.930 \, \mathrm{bits}$ \\
        $\mathbf{\{1\} \{3\}}$ & \cellcolor{unq1}$\mathbf{\Pi_{\textbf{unq},1}}$ & $\mathbf{0.278 \, \textbf{bits}}$ & $0.905 \, \mathrm{bits}$\\

        $\{2\}$ & \cellcolor{unq2}$\Pi_{\mathrm{unq},2}$ & $0.000 \, \mathrm{bits}$ & $0.664 \, \mathrm{bits}$\\
        $\{2\} \{1, 3\}$ & \cellcolor{unq2}$\Pi_{\mathrm{unq},2}$ & $0.004 \, \mathrm{bits}$ & $0.664 \, \mathrm{bits}$ \\
        $\mathbf{\{2\} \{3\}}$ & \cellcolor{unq2}$\mathbf{\Pi_{\textbf{unq},2}}$ & $\mathbf{0.03} \, \textbf{bits}$  & $0.657 \, \mathrm{bits}$\\
	        
		$\{1\} \{2\}$ & \cellcolor{redu}$\Pi_\mathrm{red}$ & $0.003 \, \mathrm{bits}$ & $0.630 \, \mathrm{bits}$ \\
		$\mathbf{\{1\} \{2\} \{3\}}$ & \cellcolor{redu}$\mathbf{\Pi_\textbf{red}}$ &  $\mathbf{0.627 \, \textbf{bits}}$ & $0.627 \, \text{bits}$ \\
    \end{tabular}
    \end{ruledtabular}
    
    \label{tab:toy_example_3var}
    \vspace{-0.1cm}
\end{table*}

Examining the estimation results of the trivariate PID, as listed in \cref{tab:toy_example_3var} and visualized in \cref{fig:toy_example_3var_venn}, we can observe that the synergy between ambient temperature and wind speed primarily resides within the trivariate atom characterized by the antichain $\{1, 2\}\{3\}$. This implies that the bivariate synergy between $T_A$ and $v_\mathrm{Wind}$ largely duplicates the information conveyed by the time of day. A similar trend emerges when considering bivariate redundancy $\Pi_\mathrm{red}$, which also predominantly overlaps with the time of day, i.e., resides mainly in the trivariate atom $\{1\}\{2\}\{3\}$, and also the bivariate unique information of the ambient temperature $\Pi_{\mathrm{unq},1}$, which mostly resides in the redundant atom $\{1\}\{3\}$. These findings substantiate our hypothesis that weather variables are primarily redundant with the time of day. One possible explanation for this is that the weather variables primarily act as intermediaries, conveying information that the time of day carries about the net energy transfer.

Nevertheless, there are also other PID atoms which contribute non-negligibly to the total mutual information. In particular, some information is carried synergistically between the ambient temperature and the time of day ($\{1, 3\}$) or, to a lesser extent, between the synergy between all variables ($\{1, 2, 3\}$) and the synergy between wind speed and the time of day ($\{2, 3\}$). Yet other relevant contributions come from the redundancy between the ambient temperature and the time of day ($\{1\}\{3\}$) and the redundancy between all three two-variable synergies ($\{1, 2\}\{1, 3\}\{2, 3\}$).

Further, note that the information that the time of day provides uniquely and independent of ambient temperature and wind speed is on average negative, i.e., misinformative about the total power draw. Average misinformation can arise in $I^\mathrm{sx}_\cap$ since the operational interpretation, which has been put forward in \citet{makkeh} and carries over to the continuous case, assumes a memoryless agent and interprets the average information-theoretic quantities as ensemble averages. For a hypothetical agent incapable of learning from past events, this means that gaining only the purely unique information of the time of day diminishes the agent's accuracy in predicting the net energy transfer.

\begin{figure}
    \centering
    \includegraphics[width = \columnwidth]{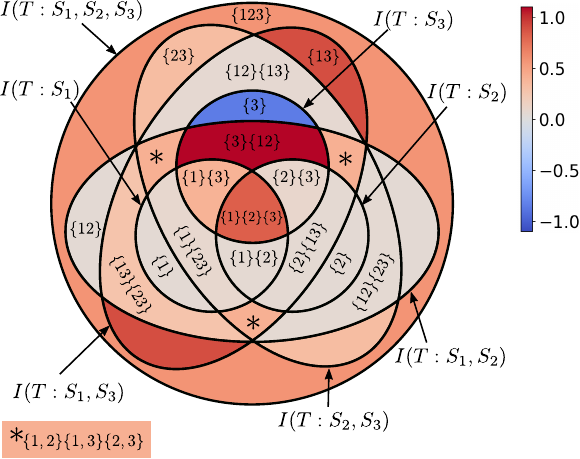}
    \caption{\textbf{Most of the information in $S_1$ (ambient temperature) and $S_2$ (wind speed) is redundant with $S_3$ (time of day).} Visualization of the results of \cref{tab:toy_example_3var} in the mereological diagram from \cref{fig:trivariate-from-bivariate}. Here, red encodes positive, i.e., informative, information contributions carried by the associated information atom, while blue encodes negative, i.e., misinformative contributions in bits.}
    \label{fig:toy_example_3var_venn}
\end{figure}

\section{Discussion}

This work presents a tractable analytical definition of continuous PID which is based on the intuitive ideas of the discrete $I^\mathrm{sx}_\cap$ measure~\cite{makkeh} and the measure-theoretical existence proofs by \citet{schick-poland}. It is in a local form by definition, is applicable to arbitrary many source variables and differentiable. We exemplify the intuition of our definition on four toy examples and observe that the measure gives mostly qualitatively comparable results to the the discrete case when computed on continuous versions of certain logic gates. Furthermore, we provide a nearest-neighbor based estimator and show that it converges to the numerically obtained values on the before-mentioned examples. Lastly, we demonstrate how this estimator can be used in a practical example by using it to uncover variable dependencies in a simulated energy management system.

The proposed measure fills a gap not addressed by existing continuous PID measures. The space of PID measures can be subdivided into mainly two groups: Those based on optimization of an auxiliary variable with respect to some ordering criterion \cite{kolchinsky2022novel, griffith2014intersection, griffith2015quantifying} as explained in the framework of \citet{kolchinsky2022novel}, and those which have a \emph{pointwise} definition, i.e.,  which can be defined for an individual realization in a self-contained functional way dependent solely on the probability densities at that realization. \citet{kolchinsky2022novel} highlights that for ordering-based PID measures, the inclusion-exclusion-rule for defining PID atoms is inconsistent and thus cannot be applied. However, being defined for local realizations, our proposed measure is fully compatible to the mereological framework of \citet{gutknecht2021} and is therefore not affected by this criticism.

Furthermore, it is crucial to remark that mutual information as well as PID are measures of statistical dependence only, and therefore do not imply any causation in the interventional sense. Note that a causality decomposition into causal atoms must have a different structure from the partial information decomposition lattice as causality does not follow the same monotonicity criterion as redundancy~\cite{gutknecht2023information}. This is because multiple causal influences can hinder and prevent each other: Consider, for instance, two stones being thrown at a bottle. While each stone individually would hit and destroy the bottle, their trajectories may intersect in such a way that the stones themselves collide when thrown simultaneously, diverting each other enough to both miss the bottle. In this case, two causes which would have individually been sufficient to cause an event have cancelled each other, which is an interaction that is impossible when considering redundancy instead.

As outlined in \cref{sec:analytical}, the definition of continuous $I^\mathrm{sx}_\cap$ leaves open a choice of how to set the relative scale between variables that needs to be made in accordance with the setup in question. While such a choice marks a departure from the purely model-free nature of classical information theory, note that this is likely unavoidable in a PID context. For PID itself, choosing one of the multiple competing PID measures according to their operational interpretation similarly introduces some notion of ``proto-semantics'' to the analysis. Overall, we see the choice of relative scale not as a weakness but as a necessity for a continuous PID measure, since what it means to be in a neighborhood of a value might differ drastically between different variables.

This work builds on the definition of the discrete $I^\mathrm{sx}_\cap$ measure \cite{makkeh} and its measure-theoretic extension to continuous variables \cite{schick-poland} to introduce a tractable analytical definition and practical nearest-neighbor based estimator for a continuous $I^\mathrm{sx}_\cap$ measure. However, this definition turns out not to be invariant with respect to bijective transformations of the individual variables--at least not without prescribing a preprocessing scheme (see \cref{apx:preprocessing}). Yet, this invariance is a property central to the measure-theoretic derivation in \cite{schick-poland}. For this reason, the analytic definition presented here cannot be directly interpreted as a concretization of the latter but should rather be seen as a practical implementation rooted in the intuitive meaning of the $I^\mathrm{sx}_\cap$ measure.

Like in the discrete case, this definition of redundancy might yield negative information atoms. In the context originally introduced by \citet{makkeh}, namely for memory-less agents and interpreting the average in \cref{eq:discrete_bivariate_global_sxpid} as an ensemble average, these are easily interpreted as misinforming the agents of the ensemble on average. For time-series data, for which the averages are taken over time, however, negative atoms can be a distracting technicality hindering a straightforward interpretation of the results. In these cases, practitioners may opt to clamp the atoms to non-negative values and adjust the other atoms accordingly so that the consistency equations (Equations \eqref{eq:redundancy_from_atoms} for the general case) continue to hold.

In this paper, we propose a novel shared-exclusion based PID measure designed specifically for purely continuous variables. However, it is important to acknowledge that real-world systems often involve a combination of continuous and discrete source variables. Additionally, even certain individual variables may comprise both discrete and continuous parts. To address these scenarios, we introduce an analytical formulation in \cref{apx:mixed}, outlining an ansatz for handling mixed systems. While we defer the presentation of specific examples demonstrating its application and the development of a mixed continuous-discrete estimator to future research, this work lays the groundwork for a comprehensive understanding of information decomposition in systems with diverse variable types. 

Overall, with this paper we introduce a new analytical formulation for a shared-exclusion based continuous PID measure and provide an estimator that makes it applicable to practitioners from all sciences and technology.

\section*{Acknowledgements}
We would like to thank Anja Sturm for fruitful discussions and valuable feedback on this paper. We further want to thank everyone in the Wibral lab and the Honda Research Institute for their feedback and indirect contributions.

DE, KSP, AM and MW are employed at the Göttingen Campus Institute for Dynamics of Biological Networks (CIDBN). DE and MW were supported by funding from the Ministry for Science and Education of Lower Saxony and the Volkswagen Foundation through the ``Niedersächsisches Vorab'' under the program ``Big Data in den Lebenswissenschaften'' -- project ``Deep learning techniques for association studies of transcriptome and systems dynamics in tissue morphogenesis''. KSP, FL and PW are funded by the Honda Research Institute Europe GmbH.

\appendix
\crefalias{section}{appendix}

\section{On the shared-exclusions measure of redundancy}
\label{appendix:isx_visual}
The definition of bivariate $I^\mathrm{sx}_\cap$ in \cref{eq:discrete_bivariate_local_sxpid} can also be derived from shared exclusions of parts of the underlying abstract probability space $(\Omega, \mathbb{P})$, i.e., the parts of $(\Omega, \mathbb{P})$ that are rendered impossible by observation of either variable. To determine the precise subsets of the probability space to be excluded, we assume that a realization $\bm S = (S_1, S_2) = \bm s = (s_1, s_2)$ has been taken by the system and we have gained the information that either $S_1=s_1$ or $S_2 = s_2$ have materialized. Through this observation, the part of the probability space corresponding to any other realization $\bm s' = (s_1', s_2')$ is rendered impossible unless either $s_1' = s_1$ or $s_2' = s_2$. If neither of those equations holds, then $\bm s'$ would not fulfill the logical or-statement in \cref{eq:discrete_bivariate_local_sxpid} and would hence be impossible under the assumption (see \cref{fig:exclusion_prob_space}).

\begin{figure*}
    \centering
    \includegraphics[width=0.8\textwidth]{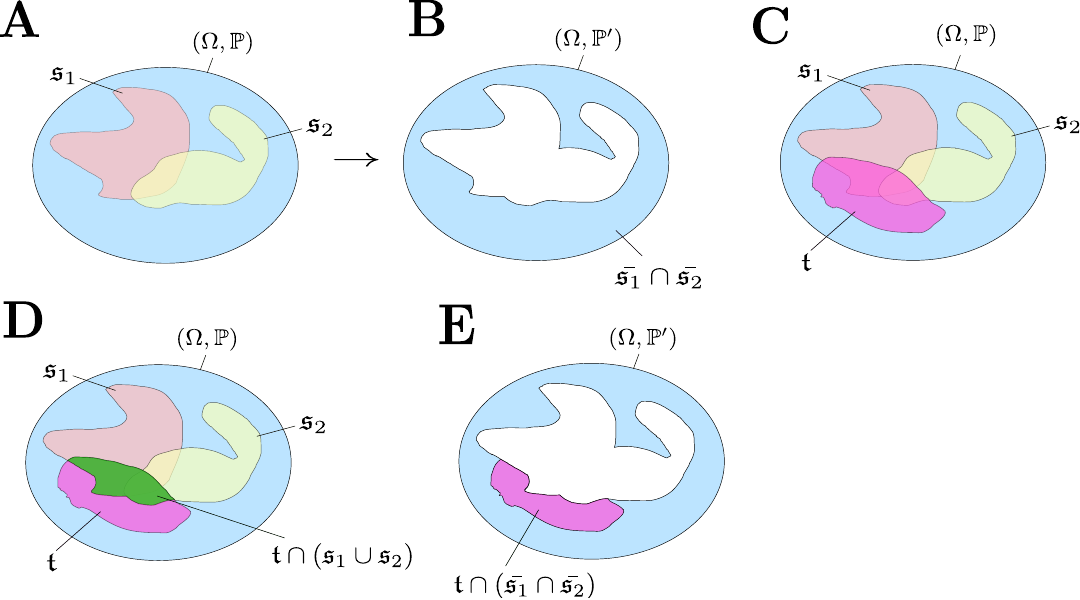}
    \caption{\textbf{The pointwise redundancy $i^{\mathrm {sx}}_\cap$ can be derived from principles of probability exclusion.} Local redundant information for two sources $S_1$ and $S_2$ with respect to target $T$ at a joint realization $(s_1, s_2, t)$. Each realization corresponds to one of the events in the probability space denoted by $\mathfrak{s}_1, \mathfrak{s}_2$ and $\mathfrak{t}$ in the probability space $(\Omega, \mathbb{P})$. Here we aim at assessing the benefit of knowing that either $S_1 = s_1$ or $S_2 = s_2$ in predicting $t$. In panel $\mathbf{A}$ the events $\mathfrak{s}_1$ and $\mathfrak{s}_2$ are displayed. In panel $\mathbf{B}$ there is the intersection of both complements from the events in panel $\mathbf{A}$, which is the area of probability mass which either event excludes. In panel $\mathbf{C}$ we consider a target variable and its event set $\mathfrak{t}$ in $\Omega$, while panel $\textbf{D}$ shows the area of interest, the green set represents the part of event $\mathfrak{t}$ that is included in also either $\mathfrak{s}_1$ or in $\mathfrak{s}_2$ which corresponds to the part of the probability of realization $t$, which is accessible when assuming that the logical statement $S_1 = s_1 \lor S_2 = s_2$ is true, or, correspondingly, $W_{\{1\},\{2\}} = 1$. While the green area can be measured via the probability measure directly, i.e., $\mathbb{P}(\mathfrak{t} \cap (\mathfrak{s}_1 \cup \mathfrak{s}_2))$, we can similarly represent it using the shared exclusion overlapping with $\mathfrak{t}$ as in panel $\mathbf{E}$, by $\mathbb{P}'(\mathfrak{t}) - \mathbb{P}'(\mathfrak{t} \cap (\bar{\mathfrak{s}_1} \cap \bar{\mathfrak{s}_2})) = \frac{\mathbb{P}(\mathfrak{t}) - \mathbb{P}(\mathfrak{t} \cap (\bar{\mathfrak{s}_1} \cap \bar{\mathfrak{s}_2}))}{1 - \mathbb{P}(\mathfrak{s}_1 \cup \mathfrak{s}_2)}$. This green area then has to be compared to the regular probability of obtaining the realization $t$ without knowledge of any of the sources, $\mathbb{P}(\mathfrak{t})$.}
    \label{fig:exclusion_prob_space}
\end{figure*}

When parts of the probability space are rendered impossible by observing a certain event, the resulting posterior probability measure $\mathbb{P}$ needs to be normalized on $\Omega$, yielding an adapted probability measure $\mathbb{P}'$, known as the conditional probability measure. The visualization \cref{fig:exclusion_prob_space} shows the exclusion of specific parts of the probability space leading to descriptions of the information that can be assessed from either $S_1$ or $S_2$ locally about $T$.

Similarly, this concept can then be generalized to collections $\alpha = \{\bm a_j\}_j = \{\{s_{ij}\}_i\}_j$ of three or more source variables by excluding the part of the probability space that is redundantly excluded by all joint collections of random variables $a_j$. For instance, in a quadrivariate (i.e., four source variable) system, the redundancy between the collections $\{S_1, S_3\}$ and $\{S_2, S_4\}$ is defined by the joint exclusions between the collections, which is given by all realizations $\bm s' = (s'_1, s'_2, s'_3, s'_4)$ for which neither $(s'_1, s'_3) = (s_1, s_3)$ nor $(s'_2, s'_4) = (s_2, s_4)$ holds. This generalization to collections of subsets of the source variables turns out to be of the form of local mutual information with respect to an auxiliary random variable 
$$\mathcal{W}_{\Phi}(\bm s) = \bigvee\limits_{a \in \mathcal{P}(\mathbf{S}):\Phi(a)=1} \; \bigwedge\limits_{i \in a} (S_i = s_i)$$
and reverts to a classical mutual information quantity for \emph{self-redundancies}, i.e., $\alpha = \{\bm a\}$. Further, its dependence on elements of the power set of the index set of sources (set of all collections of sources) implies that the $I^\mathrm{sx}_\cap$ measure of redundant information gives rise to a full lattice of PID atoms as described in \cref{sec:nd_pid_lattice}. Note also that this measure, as a composition of differentiable functions, is differentiable itself with respect to minor perturbations of the underlying probability mass function. Additionally, it fulfills a target chain rule~\cite{makkeh}.

\section{Construction of densities of logical statements} \label{appendix:derivation_density_logical_statements}

In this appendix section we argue for the explicit form of the quasi-densities appearing in $I^\mathrm{sx}_\cap$, which stem from a logical statement allowing also for disjunctions, and not only conjunctions as is usual in traditional probability theory.

Let us recall that when measuring a union of two sets, by the law of inclusion-exclusion, we can express the result as the following sum:
\begin{align*}
    \mathbb{P}^{S_1, S_2}(B \cup C) = \mathbb{P}^{S_1, S_2}(B) + \mathbb{P}^{S_1, S_2}(C) - \mathbb{P}^{S_1, S_2}(B \cap C) \, . 
\end{align*}
Similarly to \cref{appendix:isx_visual}, a union can be utilized to represent the logical disjunction $S_1 \in B \lor S_2 \in C$ via measuring the set $B \times E_{S_2} \cup E_{S_1} \times C$ where $E_{S_i}$ denotes the respective space of possible realizations for $S_i$. Measuring $B \times E_{S_2} \cup E_{S_1} \times C$ then yields
\begin{gather*}
\mathbb{P}^{S_1, S_2}(B \times E_{S_2} \cup E_{S_1} \times C)\\
= \mathbb{P}^{S_1}(B) + \mathbb{P}^{S_2}(C) - \mathbb{P}^{S_1, S_2}(B \cap C) \, ,    
\end{gather*}
i.e., by using the marginal measures, just as it is done in discrete settings.
Suppose now that we know that the statement $S_1 \in B \lor S_2 \in C$ is true. How would we measure the probability of additional events under this presumption? Assuming the event $S_1 \in B \lor S_2 \in C$ is true means that we need to use a conditional probability measure to measure additional events $A \in E_{S_1} \times E_{S_2}$.
Thus, to obtain a probability measure under the assumption that $S_1 \in B \lor S_2 \in C$ is true, we can write

\begin{gather*}
    \mathbb{P}^{(S_1, S_2)}_{S_1 \in B \lor S_2 \in C}(A) := \mathbb{P}^{S_1}(A \cap B) + \mathbb{P}^{S_2}(A \cap C)\\
    - \mathbb{P}^{(S_1, S_2)}(A \cap (B \times C)).
\end{gather*}

Applying this to infinitesimally small sets $B = \{s_1\}$ and $C = \{s_2\}$, we can achieve the same form by adding slices of conditional measures with respect to the constraints set by the disjunction. We define 
\begin{gather*}
    \mathbb{P}^{(S_1, S_2)}_{(S_1 = s_1) \lor (S_2 = s_2)}(A) := \mathbb{P}^{S_1}(\{x\vert (x, y) \in A, y = s_2\})\\
    + \mathbb{P}^{S_2}(\{y \vert (x, y) \in A, x = s_1\})
    - \mathbb{P}^{(S_1, S_2)}(\{(s_1, s_2)\} \cap A)
\end{gather*}
for an event $A$. 

Note that when taking the limit to infinitesimal sets, a choice needs to be made regarding the relative rate of convergence of the two variables. Choosing to interpret the two variables as being on equal scales, i.e., using the same $\epsilon$ to search in the respective marginal neighborhoods, this uniquely defines the measure as the only one fulfilling
\begin{widetext}
\begin{align*}
    \norm{\mathbb{P}^{(S_1, S_2)}_{(S_1 \in B_{s_1}(\epsilon)) \lor (S_2 \in B_{s_2}(\epsilon))}(A) - \epsilon \cdot \mathbb{P}^{(S_1, S_2)}_{(S_1 \in \{s_1\}) \lor (S_2 \in  \{s_2\})}(A)} \leq C_\epsilon \cdot \epsilon \overset{\epsilon \searrow 0}{\to} 0 \quad \forall A \in \mathcal{E} \, ,
\end{align*}
\end{widetext}
with $C_\epsilon \overset{\epsilon \searrow 0}{\to} C$, where $C$ is a constant. This can easily be seen when using the triangle inequality and recalling the form of a conditional probability measure, i.e.,
\begin{widetext}
\begin{align*}
        \mathbb{P}^{S_1, S_2}(\{(x, y) \in A \vert y \in B_{s_2}(\epsilon)\}) &= \underbrace{\frac{\mathbb{P}^{S_1, S_2}(\{(x, y) \in A \vert y \in B_{s_2}(\epsilon)\})}{\mathbb{P}^{S_2}(B_{s_2}(\epsilon))}}_{\overset{\epsilon \searrow 0}{\to} \mathbb{P}^{S_1}(\{(x, y) \in A \vert y = s_2\})} \underbrace{\frac{\mathbb{P}^{S_2}(B_{s_2}(\epsilon))}{\lambda_{S_2}(B_{s_2}(\epsilon))}}_{\overset{\epsilon \searrow 0}{\to} f_{S_2}(s_2)} \underbrace{\lambda_{S_2}(B_{s_2}(\epsilon))}_{2\epsilon} \\
        &\quad \overset{\epsilon \ll 1}{\sim} 2 \epsilon \mathbb{P}^{S_1}(\{(x, y) \in A \vert y = s_2\}) f_{S_2}(s_2) \, .
    \end{align*}
    \end{widetext}
    This idea is further visualized in \cref{figure_limit_continuity_measure}.
    \begin{figure*}
        \centering
        \includegraphics[width=\textwidth]{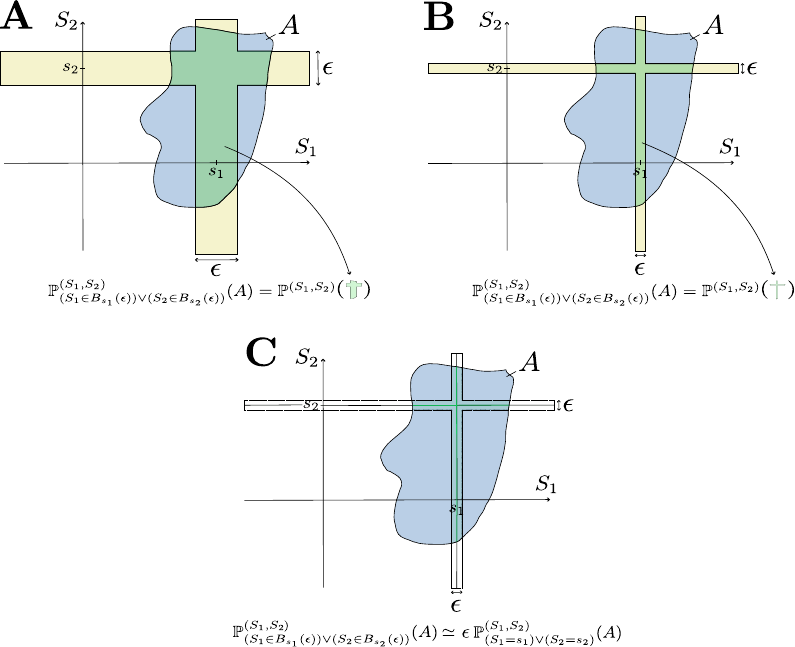}
        \caption{\textbf{We choose our localized conditional probability measure such that its extension for small neighbourhoods naturally results in the global probability measure.} In panel $\textbf{A}$ the intuitive meaning of the measure $\mathbb{P}^{(S_1, S_2)}_{(S_1 \in B_{s_1}(\epsilon)) \lor (S_2 \in B_{s_2}(\epsilon))}(A)$ for a set $A$ is illustrated. In panel $\textbf{B}$ one sees the consequences of $\epsilon$ decreasing, leading to $\mathbb{P}^{(S_1, S_2)}_{(S_1 \in B_{s_1}(\epsilon)) \lor (S_2 \in B_{s_2}(\epsilon))}(A)$ approaching the conditional measure of our choice, $\mathbb{P}^{(S_1, S_2)}_{(S_1 \in \{s_1\}) \lor (S_2 \in  \{s_2\})}(A)$ multiplied with the width of the neighborhood, $\epsilon$, as shown in panel $\textbf{C}$.}
        \label{figure_limit_continuity_measure}
    \end{figure*}
    
    Evidently, this ansatz fails if any of the marginal densities $f_{S_i}$ is unbounded at $s_i$.
    The choice of this measure could also be argued for as a consequence of the simple function approximation, found for instance in \cite{Rudin87}, thrm. 7.10.
    
To arrive at a probability for an infinitesimal event, compare the above to the construction of the usual Lebesgue measure $\lambda_{S}$ measuring the maximal size of nicely shrinking balls $B_i = B_{(s_1, s_2)}(r_i)$ \cite{Rudin87}. Note that the maximal size of a $B_i$ is the diameter of the ball, $2r_i$, and in the continuous case, the last term in the numerator vanishes.

Then
\begin{align*}
    &f_{S_1 \lor S_2}(s_1, s_2)\\
    =& \lim\limits_{B_i \searrow \{(s_1, s_2)\}} \frac{\mathbb{P}^{(S_1, S_2)}_{(S_1 = s_1) \lor (S_2 = s_2)}(B_{(s_1, s_2)}(r_i))}{\lambda_{S}(B_{(s_1, s_2)}(r_i))} \\
    =& \lim\limits_{r_i \searrow 0} \frac{\mathbb{P}^{S_1}(\{(x, y) \in B_i \vert y = s_2\})}{2r_i} \\
    &+ \frac{\mathbb{P}^{S_2}(\{(x, y) \in B_i \vert x = s_1\})}{2r_i}\\
    =& \lim\limits_{r_i \searrow 0} \frac{\mathbb{P}^{S_1}(\{s_1 \pm r_i\}) + \mathbb{P}^{S_2}(\{s_2 \pm r_i\})}{2r_i}\\
    =& f_{S_1}(s_1) + f_{S_2}(s_2) \, ,
\end{align*}
where $\lambda_{S_1}(\{(x, y) \in  B_i \vert y = s_2\}) + \lambda_{S_2}(\{(x, y) \in B_i \vert x = s_1\}$ is equal to $2r_i$ since the measures $\lambda_{S_i}$ measure the diameter of the uniform ball centered around $(s_1, s_2)$.
    
This method is readily generalized to $n$ sources in the exact same way it worked for two, with the one exception being that conjunctions are treated as one variable. For example, a logical statement $\alpha(s_1, s_2, s_3) = \{S_1 = s_1\} \land \{S_2 = s_2\} \lor\{ S_3 = s_3\}$ is treated as a disjunction of a single two dimensional statement and a single one dimensional statement.

\begin{definition} \label{definition_1}

Let for a generic logical statement $\alpha$ of length $l(\alpha)$ in its disjunctive normal form $\alpha = \bigvee_{j \in J} a_j$, $I_d \subset J$ be the subset of purely discrete statements. Then we define the density of $\alpha$ as

\begin{widetext}
\begin{align*}
    \frac{\delta \mathbb{P}_\lor}{ \delta \lambda_\lor}(s_1, \ldots s_n) = f_\alpha(x_1, \ldots x_n) := \sum\limits_{k=1}^{|I_d| + 1} (-1)^{k+1} \left(\sum\limits_{\substack{m_1 < m_2 < \ldots < m_{k-1} \in I_d \\ m_1 \neq m_2 \neq \ldots \neq m_{k-1} \neq j \in J}} f_{\bigcup\limits_{q = 1}^{k-1} a_{m_q} \cup a_j}\left(\bigcup\limits_{q = 1}^{k-1} x_{a_{m_q}} \cup x_{a_j}\right) \right) \, .
\end{align*}
\end{widetext}

\end{definition}

\begin{table*}[!]
    \centering
    \caption{\textbf{Different preprocessing schemes for continuous PID gates produce qualitatively similar results on the redundant and unique gate, while a copula pretransformation results in different results for the copy and sum gates.} The results have been computed using a Monte-Carlo integration analogous to \cref{tab:analytical_examples}.}
    \begin{ruledtabular}
    \begin{tabular}{rcccc}
        &  Redundant Gate & Copy Gate & Unique Gate & Sum\\
        \hline
        No transformation\\
        \hline
        $\Pi_\mathrm{syn}$ & $0.000 \, \mathrm{bits}$ & $6.644 \, \mathrm{bits}$ & $5.525  \, \mathrm{bits}$ & $6.497  \, \mathrm{bits}$\\
        $\Pi_\mathrm{unq,1}$ & $0.000    \, \mathrm{bits}$ & $0.000 \, \mathrm{bits}$ & $1.119 \, \mathrm{bits}$ & $0.147  \, \mathrm{bits}$\\
        $\Pi_\mathrm{unq,2}$ & $0.000    \, \mathrm{bits}$ & $0.000 \, \mathrm{bits}$ & $-5.525 \, \mathrm{bits}$ & $0.147  \, \mathrm{bits}$\\
        $\Pi_\mathrm{red}$ & $6.644 \, \mathrm{bits}$ & $6.644 \, \mathrm{bits}$ & $5.525   \, \mathrm{bits}$ & $0.353  \, \mathrm{bits}$\\
        $I(T:S_1, S_2)$ & $6.644 \, \mathrm{bits}$ &  $13.288 \, \mathrm{bits}$ & $6.644 \, \mathrm{bits}$ & $7.144  \, \mathrm{bits}$\\
        \hline
        Standardization\\
        \hline
        $\Pi_\mathrm{syn}$ & $0.000 \, \mathrm{bits}$ & $6.644 \, \mathrm{bits}$ & $5.524  \, \mathrm{bits}$ & $6.497  \, \mathrm{bits}$\\
        $\Pi_\mathrm{unq,1}$ & $0.000    \, \mathrm{bits}$ & $0.000 \, \mathrm{bits}$ & $1.119 \, \mathrm{bits}$ & $0.147  \, \mathrm{bits}$\\
        $\Pi_\mathrm{unq,2}$ & $0.000    \, \mathrm{bits}$ & $0.000 \, \mathrm{bits}$ & $-5.524 \, \mathrm{bits}$ & $0.147  \, \mathrm{bits}$\\
        $\Pi_\mathrm{red}$ & $6.644 \, \mathrm{bits}$ & $6.644 \, \mathrm{bits}$ & $5.524   \, \mathrm{bits}$ & $0.353  \, \mathrm{bits}$\\
        $I(T:S_1, S_2)$ & $6.644 \, \mathrm{bits}$ &  $13.288 \, \mathrm{bits}$ & $6.644 \, \mathrm{bits}$ & $7.144  \, \mathrm{bits}$\\
        \hline
        Copula transformation\\
        \hline
        $\Pi_\mathrm{syn}$ & $0.000 \, \mathrm{bits}$ & $6.922 \, \mathrm{bits}$ & $5.671  \, \mathrm{bits}$ & $6.791  \, \mathrm{bits}$\\
        $\Pi_\mathrm{unq,1}$ & $0.000    \, \mathrm{bits}$ & $-0.279 \, \mathrm{bits}$ & $0.973 \, \mathrm{bits}$ & $-0.147  \, \mathrm{bits}$\\
        $\Pi_\mathrm{unq,2}$ & $0.000    \, \mathrm{bits}$ & $-0.279 \, \mathrm{bits}$ & $-5.671 \, \mathrm{bits}$ & $-0.147  \, \mathrm{bits}$\\
        $\Pi_\mathrm{red}$ & $6.644 \, \mathrm{bits}$ & $6.922 \, \mathrm{bits}$ & $5.671   \, \mathrm{bits}$ & $0.647  \, \mathrm{bits}$\\
        $I(T:S_1, S_2)$ & $6.644 \, \mathrm{bits}$ &  $13.288 \, \mathrm{bits}$ & $6.644 \, \mathrm{bits}$ & $7.144  \, \mathrm{bits}$\\
    \end{tabular}
    \end{ruledtabular}

    \label{tab:preprocessing_gates}
\end{table*}

\begin{example}
\begin{enumerate}
    \item Take the example $\alpha = (X_1 \land X_2) \lor (X_3) \lor (X_4 \land X_5)$, where $X_1, X_2$ and $X_4$ are discrete, and $X_3$ and $X_5$ are continuous, such that $J=\{1, 2, 3\}$ and $I_d = \{1\}$. The resulting quasi-density is
\begin{gather*}
    f_\alpha(x_1, x_2, x_3, x_4, x_5)\\
    = \underbrace{p_{X_1 X_2}(x_1, x_2) +f_{x_3}(x_3) + f_{X_4 X_5}(x_4, x_5)}_{k=1}\\
    - \quad \left(\underbrace{f_{X_1 X_2 X_3}(x_1, x_2, x_3) + f_{X_1 X_2 X_4 X_5}(x_1, x_2, x_4, x_5) }_{k=2}\right).
\end{gather*}
    \item Choosing $\alpha = (X_1 \land X_2) \lor (X_1 \land X_3)$ with $X_1, X_3$ discrete and $X_2$ continuous on the other hand, leads to
    
\begin{gather*}f_\alpha(x_1, x_2, x_3)\\
= f_{X_1 X_2}(x_1, x_2) + p_{X_1X_3}(x_1, x_3)
- f_{X_1 X_2 X_3}(x_1, x_2, x_3). 
\end{gather*}
\end{enumerate}

\end{example}

The above derived form of the density of an OR statement readily only works for purely continuous measures, as for purely discrete ones; $\epsilon^i \to 1$ and thus the last term does not vanish, yielding the expression commonly derived via the inclusion-exclusion rule, that is,
\begin{align*}
    \frac{\mathbb{P}_\lor(A_i, B_i)}{\epsilon^i} \to p_X(x_A) + p_Y(y_B) - p_{XY}(x_A, y_B) \, .
\end{align*}

Thus those corner cases of purely discrete and purely continuous systems can be understood as they have comparable $\epsilon^i_A = \epsilon^i_B$, but if one considers the case of a system consisting of both types of variables, one cannot fulfill this requirement. 
To circumvent this nuisance while still measuring the maximal possible amount of probability, we define the density of an OR-statement in the measure-theoretic case to be
\begin{widetext}
    
\begin{gather*}
    \frac{\delta \mathbb{P}_\lor}{ \delta \lambda_\lor}(x_A, y_B)\\
    := \lim\limits_{i \to \infty} \frac{\mathbb{P}_X(\Delta A^i)}{\epsilon_A^i} + \frac{\mathbb{P}_Y(\Delta B^i)}{\epsilon_B^i} 
    - \frac{\mathbb{P}_{XY}(\Delta A^i \times B_{i-1}) + \mathbb{P}_{XY} ( A_{i-1} \times \Delta B^i) - \mathbb{P}_{XY}(\Delta A^i \times \Delta B^i)}{\epsilon_A^i \epsilon_B^i} \max\left(\epsilon_A^i, \epsilon_B^i\right)
\end{gather*}
\end{widetext}
with $\Delta A^i = A_i\setminus A_{i+1}$, and the same for $\Delta B^i = B_i\setminus B_{i+1}$. With this definition, the special cases of purely discrete and purely continuous arise again, and the mixed case yields sums of probability masses and densities of the same order of magnitude with respect to the underlying space vanishing, i.e., in the case where $X$ is continuous and $Y$ is discrete, we find 
\begin{align*}
    \frac{\delta \mathbb{P}_\lor}{ \delta \lambda_\lor}(x_A, y_B) = f_X(x_A) + p_Y(y_B) - f_{XY}(x_A, y_B) \, .
\end{align*}

Requiring the same to hold for higher order statements $\alpha$, this can easily be generalized utilizing the inclusion-exclusion law, leading to \cref{definition_1}.

\section{Variable preprocessing}
\label{apx:preprocessing}

This appendix chapter discusses the effect that different preprocessing schemes have on the results of the introduced PID definition. Different to the mutual information and the measure-theoretic existence proofs by \cite{schick-poland}, the analytic definition for redundancy suggested in this paper is not invariant under isomorphic mappings of individual source variables. At the core, the reason for this fact is that in the continuous case, the meaning of a logical disjunction turns out to be contingent on the relative scale of the two variables, since densities are defined as the limits of neighbourhoods, and there is no canonical way to compare neighborhoods between two different variables to make them shrink at the same pace.

\cref{definition_1} has been made under the assumption that the two variables are on a comparable scale. While this holds true for many scientific questions in which the variables in question are distributed equivalently, in other scenarios a preprocessing step might become necessary to compare the variables in a way adequate for the research question. \cref{tab:preprocessing_gates} shows how the PID results of the logic gate examples in \cref{tab:analytical_examples} change when different preprocessing schemes are applied. First, note that preprocessing does not impact the mutual information quantities. While there are no qualitative differences between the preprocessing schemes for the redundant and the unique gate, the copula transformation, i.e. changing the random variables to their own cumulative density making them uniform on the unit interval, introduces some unique information in the copy gate and negative unique information in the sum gate. These results show that while different preprocessing schemes might often lead to comparable results, it is nevertheless important to choose a preprocessing scheme that matches the scientific question to yield reliable and interpretable results.

Note, also, that specifically the preprocessing based on standardization makes no difference compared to no transformation for the given examples. This, however, is not true in general but only if the two source random variables already have the same standard deviation, as is the case for all examples shown here.

\section{Proof of the differentiability of $I^\mathrm{sx}_\cap$}
\label{apx:differentiability}
The analytical definition of the redundancy (see \cref{eq:bivariate_local_sxpid}) allows us to prove its differentiability and even smoothness for both the local and global measure. 

\begin{theorem}
    The local continuous measure of shared information

    \begin{align*}
       i^\mathrm{sx}_\cap[f](t, \{s_1\}\{s_2\}) 
       &= \log_2 \left[\frac{f_{T| S_1 \lor S_2}(t| s_1 \lor s_2)}{f_T(t)}\right]\\
       &=\log_2 \left[\frac{f_{TS_1}(t, s_1) + f_{TS_2}(t, s_2)}{f_T(t)(f_{S_1}(s_1) + f_{S_2}(s_2))}\right]
    \end{align*}
    
    varies smoothly with respect to changes of the underlying joint probability density $f_{TS_1S_2}$. Moreover, for more than two source variables, $i_\cap^{\mathrm{sx}}[f_{TS_1S_2}](t:\alpha)$ is smooth for arbitrary antichains $\alpha$.
\end{theorem}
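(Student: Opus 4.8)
\section*{Proof proposal}

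The plan is to exhibit $i^\mathrm{sx}_\cap$ as a composition of a linear ``evaluation'' map with a map that is manifestly $C^\infty$ on an open set, so that smoothness follows from the chain rule. First I would fix what ``varies smoothly with the density'' should mean here: I would take it to mean that along any smooth finite-parameter family $\{f^\theta_{TS_1S_2}\}_{\theta\in\Theta}$ of joint densities, with $\Theta\subseteq\mathbb{R}^m$ open and $\theta\mapsto f^\theta_{\bm b}(\bm x)$ being $C^\infty$ for each marginal index set $\bm b$ and each fixed argument $\bm x$, the map $\theta\mapsto i^\mathrm{sx}_\cap[f^\theta](t,\{s_1\}\{s_2\})$ is $C^\infty$ near every base point $\theta_0$ at which $i^\mathrm{sx}_\cap$ is finite. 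This is the direct continuous analogue of the discrete differentiability result of \citet{makkeh}, which concerns differentiability in the entries of the probability mass function; the affine one-parameter case $f^\theta = f + \theta\,g$ with $\int g = 0$ recovers ordinary directional (Gateaux) smoothness along an admissible perturbation $g$.

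Next I would restrict to an evaluation point $(t,s_1,s_2)$ in the open region where the measure is finite, i.e.\ where $f_T(t)>0$, $f_{S_1}(s_1)+f_{S_2}(s_2)>0$ and $f_{TS_1}(t,s_1)+f_{TS_2}(t,s_2)>0$ (under the standing continuity assumptions the last inequality already forces the first two); outside this set $i^\mathrm{sx}_\cap$ takes the values $\pm\infty$, exactly as local mutual information does, so smoothness there is neither expected nor claimed. The decomposition into smooth pieces is then: marginalization $f\mapsto f_{TS_1}$, $f\mapsto f_{TS_2}$, $f\mapsto f_T$, $f\mapsto f_{S_1}$, $f\mapsto f_{S_2}$ is linear, and followed by point evaluation it gives, by hypothesis on the family, a $C^\infty$ map $\theta\mapsto \big(f^\theta_{TS_1}(t,s_1),\,f^\theta_{TS_2}(t,s_2),\,f^\theta_{T}(t),\,f^\theta_{S_1}(s_1),\,f^\theta_{S_2}(s_2)\big)\in\mathbb{R}^5$; the numerator $N(\theta)=f^\theta_{TS_1}(t,s_1)+f^\theta_{TS_2}(t,s_2)$, the factor $f^\theta_T(t)$, and $D(\theta)=f^\theta_{S_1}(s_1)+f^\theta_{S_2}(s_2)$ are finite sums of these coordinates, hence $C^\infty$ in $\theta$; on the open set where $f^\theta_T(t)\,D(\theta)\neq 0$ the ratio $R(\theta)=N(\theta)/\big(f^\theta_T(t)\,D(\theta)\big)$ is $C^\infty$, since division is smooth away from zero; and $\log_2$ is $C^\infty$ on $(0,\infty)$ with $R(\theta_0)>0$, so by continuity $R>0$ on a neighborhood of $\theta_0$ and $\theta\mapsto\log_2 R(\theta)$ is $C^\infty$ there. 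Composing these maps proves the two-source statement.

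For an arbitrary antichain $\alpha$ the same argument applies once one notes that the relevant quasi-densities are again finite sums of marginals of $f_{TS_1\dots S_n}$: in the purely continuous regime the exclusion terms in \cref{definition_1} vanish, so $f_\alpha(\bm s)=\sum_{\bm a\in\alpha} f_{\bm S_{\bm a}}(\bm s_{\bm a})$ and the joint quasi-density is $f_{\alpha,T}(t,\bm s)=\sum_{\bm a\in\alpha} f_{T\bm S_{\bm a}}(t,\bm s_{\bm a})$, both linear in $f$. Thus $N(\theta)=f^\theta_{\alpha,T}(t,\bm s)$, $D(\theta)=f^\theta_\alpha(\bm s)$ and $f^\theta_T(t)$ are once more $C^\infty$ in $\theta$, and the ratio-then-log composition goes through verbatim. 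Positivity of the denominator is automatic away from the boundary, since $f_\alpha$ is a sum of nonnegative densities; the only genuine restriction is again strict positivity at the evaluation point, i.e.\ that $(t,\bm s)$ lies in the domain of finiteness of $i^\mathrm{sx}_\cap(t:\alpha)$.

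The main obstacle is not any single computation---each step above is routine---but pinning down the right infinite-dimensional setting: point evaluation of a density is not continuous on $L^1$, so one cannot naively assert Fr\'echet smoothness on the full Banach space of densities. My plan circumvents this exactly as the discrete treatment does, by phrasing smoothness relative to smooth parametric families (equivalently, directional smoothness along admissible perturbations), for which the ``evaluate marginals at fixed points'' operation is smooth by construction. The secondary point requiring care is the boundary behavior, which I would dispatch simply by excising the closed set where the numerator or a denominator factor vanishes and remarking that the resulting singular behavior mirrors that of every $\log$-based information quantity. If one additionally wants the \emph{global} $I^\mathrm{sx}_\cap$, it follows by differentiating $\int f^\theta\, i^\mathrm{sx}_\cap[f^\theta]$ under the integral sign, which needs only a local integrable domination of the $\theta$-derivatives of the integrand --- a hypothesis I would state explicitly rather than attempt in full generality here.
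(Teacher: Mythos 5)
Your proposal is correct, and it rests on the same core observation as the paper's proof: at a fixed evaluation point, $i^\mathrm{sx}_\cap[f]$ depends on $f$ only through finitely many marginal evaluations, each linear in $f$, composed with an elementary function that is smooth wherever the arguments are positive. The execution differs, though. The paper fixes a perturbation direction $g$, forms the difference quotient for $f+\epsilon g$, splits the logarithm, and Taylor-expands $\log(1+x)$ to obtain the explicit first variation
\begin{equation*}
D(i_\cap^{\mathrm{sx}}[f])g = \frac{g_{TS_1}+g_{TS_2}}{f_{TS_1}+f_{TS_2}} - \frac{g_{S_1}+g_{S_2}}{f_{S_1}+f_{S_2}} - \frac{g_T}{f_T}\,,
\end{equation*}
and then infers smoothness from the algebraic form of this derivative (higher derivatives are combinations of powers of the reciprocal sums); this closed-form derivative is then reused in the paper's corollary on the global measure. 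You instead argue structurally via the chain rule, with smoothness phrased along smooth parametric families, which yields the theorem without producing the derivative formula. Your framing is arguably the more careful one regarding the infinite-dimensional setting: the paper's footnote asserts that the directional derivative is a Fr\'echet derivative, which is delicate precisely because point evaluation is not continuous in $L^1$, a subtlety you flag and sidestep explicitly. Your hypotheses at the evaluation point (strict positivity of $f_T(t)$ and of the numerator and denominator sums) are also cleaner than the paper's assumption $f_{TS_1S_2}(t,s_1,s_2)>0$ together with $g(t,s_1,s_2)\neq 0$, and your reduction of the antichain case to sums of marginal densities matches the paper's purely continuous specialization of \cref{definition_1}. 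The trade-off is that the paper's computational route delivers the explicit functional derivative needed downstream, while yours delivers smoothness more economically; if you wanted your version to support the corollary on $I^\mathrm{sx}_\cap[f]=\int f\, i^\mathrm{sx}_\cap[f]$ you would still have to compute that first variation and supply the integrable domination you mention.
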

\begin{proof}
    Let $f_{TS_1S_2}$ be a density function and $(t, s_1, s_2)$ a point in state space such that $f_{TS_1S_2}(t, s_1, s_2) > 0$. Now, let $g_{TS_1S_2}$ be a function 
    such that $g_{TS_1S_2}(t, s_1, s_2) \neq 0$.
    
    Then we construct the derivative $D(i_\cap^{\mathrm{sx}}[f])g$ in the direction of $g$ by considering the limit of a shifted density $f_{TS_1S_2} \mapsto f_{TS_1S_2} + \epsilon \cdot g_{TS_1S_2}$ with $\epsilon > 0$, yielding
    \begin{widetext}
    \begin{align*}
        D(i_\cap^{\mathrm{sx}}[f])g = &\lim_{\epsilon \to 0} \frac{i_\cap^{\mathrm{sx}}[f_{TS_1S_2}+ \epsilon \cdot g_{TS_1S_2}](t:s_1 \lor s_2) - i_\cap^{\mathrm{sx}}[f_{TS_1S_2}](t:s_1 \lor s_2)}{\epsilon} \\
        = &\lim_{\epsilon \to 0} \frac{\log\left[\frac{f_{TS_1}(t, s_1) + f_{TS_2}(t, s_2) + \epsilon (g_{TS_1}(t, s_1) + g_{TS_2}(t, s_2))}{(f_T(t)+\epsilon g_T(t))(f_{S_1}(s_1)+f_{S_2}(s_2)+\epsilon(g_{S_1}(s_1)+g_{S_2}(s_2))}\right] - \log\left[\frac{f_{TS_1}(t, s_1) + f_{TS_2}(t, s_2)}{f_T(t)(f_{S_1}(s_1) + f_{S_2}(s_2))}\right] }{\epsilon} \\
        = &\lim_{\epsilon \to 0} \frac{\log_2 \left[1 + \epsilon \frac{g_{TS_1}(t,s_1) + g_{TS2}(t,s_2)}{f_{TS_1}(t,s_1) + f_{TS_2}(t,s_2)}\right] - \log_2 \left[1 + \epsilon \frac{g_{S_1}(s_1)+g_{S_2}(s_2)}{f_{S_1}(s_1)+f_{S_2}(s_2)}\right] - \log_2 \left[1 + \epsilon \frac{g_T(t)}{f_T(t)}\right]}{\epsilon}\, .
    \end{align*}
    \end{widetext}
    At this point, we  use the Taylor expansion $\ln[1+x] = \sum\limits_{n=1}^\infty \frac{(-1)^{n+1}x^n}{n} = x + \mathcal{O}(x^2)$ to determine the final form of the derivative
    \begin{widetext}
    \begin{align*}
        D(i_\cap^{\mathrm{sx}}[f])g &= \frac{g_{TS_1}(t,s_1) + g_{TS_2}(t,s_2)}{f_{TS_1}(t,s_1) + f_{TS_2}(t,s_2)} - \frac{g_{S_1}(s_1)+g_{S_2}(s_2)}{f_{S_1}(s_1)+f_{S_2}(s_2)} - \frac{g_T(t)}{f_T(t)} \, ,
    \end{align*}
    \end{widetext}
    which is finite by assumption.\footnote{Interpreted as a linear functional $D(i_\cap^{\mathrm{sx}}[f])$ acting on $g$, this expression can be seen to be the Fréchet derivative of the local redundancy fulfilling~\cite{cartan1971differential}
    \begin{equation*}
        \lim_{||\epsilon g|| \to 0} \frac{||i^\mathrm{sx}_\cap[f+\epsilon g] - i^\mathrm{sx}_\cap[f] - D(i_\cap^{\mathrm{sx}}[f])g||}{||\epsilon g||} = 0 \, .
    \end{equation*}
    }

    Considering the form that results from the derivative, it can be concluded that the $i_\cap^{\mathrm{sx}}[f_{TS_1S_2}](t:s_1 \lor s_2)$ is not only differentiable at point $(t, s_1, s_2)$, but even smooth, since the following derivatives will be linear combinations of powers of terms of the form $\frac{1}{f_{TS_1} + f_{TS_2}}$, $\frac{1}{f_{S_1} + f_{S_2}}$ and $\frac{1}{f_{T}}$. 
    
    Note further that this proof naturally extends to any arbitrary antichain $\alpha$ of interest by an analogous argument.
    \end{proof}

    \begin{corollary}
        The global measure of redundant information $I^{\mathrm{sx}}_\cap[f] = \int f \, i^{\mathrm{sx}}_\cap[f]$ is a smooth functional of the underlying density $f$ if $f$ and its marginals are bounded from below by some $\xi > 0$.
    \end{corollary}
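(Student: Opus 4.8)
The plan is to derive the corollary from the smoothness of the \emph{local} measure established in the Theorem, together with a justification for interchanging Fréchet differentiation and integration; the lower bound $\xi>0$ is precisely what makes that interchange legitimate. Note first that the hypothesis is only consistent on a bounded domain (otherwise $\int f=\infty$), so we work on a fixed compact state space $K$ with $f$ supported on $K$ and, implicitly, bounded there (e.g.\ $f$ continuous on $K$); the natural topology on the admissible densities is then the supremum norm, so that ``smooth functional'' means Fréchet-$C^\infty$. Writing $I^{\mathrm{sx}}_\cap[f]=\int_K f_{TS_1S_2}\, i^{\mathrm{sx}}_\cap[f]\,\mathrm{d}\mu$, the integrand is the product of the identity-in-$f$ factor $f_{TS_1S_2}$ and the smooth-in-$f$ factor $i^{\mathrm{sx}}_\cap[f]$, so the product rule formally gives the first derivative in a perturbation direction $g$ as
\begin{equation*}
D\!\left(I^{\mathrm{sx}}_\cap[f]\right)g=\int_K g\, i^{\mathrm{sx}}_\cap[f]\,\mathrm{d}\mu+\int_K f\, D\!\left(i^{\mathrm{sx}}_\cap[f]\right)g\,\mathrm{d}\mu ,
\end{equation*}
with $D(i^{\mathrm{sx}}_\cap[f])g$ the explicit expression obtained in the proof of the Theorem. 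The task is to show each integral converges and that this formal derivative is the genuine one.

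First I would record the uniform bounds the hypothesis supplies. Since $f$ and all its marginals are $\geq\xi$, the quantities appearing in denominators are controlled uniformly over $K$: $f_{TS_1}(t,s_1)+f_{TS_2}(t,s_2)\geq 2\xi$, $f_{S_1}(s_1)+f_{S_2}(s_2)\geq 2\xi$, and $f_T(t)\geq\xi$, so the three ``dangerous'' factors $\tfrac{1}{f_{TS_1}+f_{TS_2}}$, $\tfrac{1}{f_{S_1}+f_{S_2}}$, $\tfrac{1}{f_T}$ are bounded by $\tfrac{1}{2\xi}$, $\tfrac{1}{2\xi}$, $\tfrac{1}{\xi}$ respectively. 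Hence $|D(i^{\mathrm{sx}}_\cap[f])g|$ is bounded on $K$ by a constant times $\lVert g\rVert_\infty$, and $|i^{\mathrm{sx}}_\cap[f]|$ is bounded as well (the numerator of the log-argument is $\geq 2\xi$, and with $f$ bounded on $K$ the densities are also bounded above, so the log-argument is bounded away from $0$ and $\infty$). Both integrands above are therefore dominated by integrable functions, the first integral is finite with $|\int g\, i^{\mathrm{sx}}_\cap[f]|\le \lVert i^{\mathrm{sx}}_\cap[f]\rVert_\infty\int|g|$, and a standard differentiation-under-the-integral-sign argument (dominated convergence applied to the difference quotients) shows $D(I^{\mathrm{sx}}_\cap[f])g$ exists, equals the displayed expression, and depends linearly and continuously on $g$.

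For the higher derivatives I would iterate the structural remark at the end of the Theorem's proof: every further Fréchet derivative of $i^{\mathrm{sx}}_\cap[f]$ is a finite linear combination of products of the bounded factors $\tfrac{1}{f_{TS_1}+f_{TS_2}}$, $\tfrac{1}{f_{S_1}+f_{S_2}}$, $\tfrac{1}{f_T}$ with the ratios $\tfrac{g_{TS_1}+g_{TS_2}}{f_{TS_1}+f_{TS_2}}$, $\tfrac{g_{S_1}+g_{S_2}}{f_{S_1}+f_{S_2}}$, $\tfrac{g_T}{f_T}$, and hence is again bounded uniformly on $K$ by a constant times a product of $\lVert\cdot\rVert_\infty$ of the directions. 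Multiplying by $f$ (or, via the outer product rule, by a linear factor $g$) and integrating produces, at each order, an integrand with a direction-uniform integrable dominating function, so differentiation under the integral may be repeated indefinitely; thus $I^{\mathrm{sx}}_\cap[f]\in C^\infty$. The same argument applies verbatim for an arbitrary antichain $\alpha$, since $i^{\mathrm{sx}}_{\cap,\alpha}[f]$ has the identical form---a logarithm of a ratio of sums of joint-marginal densities---and the hypothesis bounds below every marginal entering those sums.

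I expect the main obstacle to be exactly the legitimacy of differentiating under the integral, i.e.\ exhibiting a single integrable dominating function that is uniform in the perturbation direction; this is where the bound $\xi$ is indispensable, since without it the factors $1/(f_{TS_1}+f_{TS_2})$ and its siblings are unbounded wherever a density vanishes and uniform integrability collapses. A secondary point to pin down is the functional-analytic setting: the choice of norm/topology on the space of admissible densities, and the implicit need for an upper bound on $f$ (automatic on a compact support under continuity) to keep $|i^{\mathrm{sx}}_\cap[f]|$ and hence the term $\int g\, i^{\mathrm{sx}}_\cap[f]$ finite---on a non-compact domain one would instead have to impose decay and integrability conditions on $f$ and on the admissible $g$.
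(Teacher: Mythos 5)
Your proposal is correct and follows essentially the same route as the paper: decompose $D\bigl(I^{\mathrm{sx}}_\cap[f]\bigr)g=\int g\,i^{\mathrm{sx}}_\cap[f]+\int f\,D\bigl(i^{\mathrm{sx}}_\cap[f]\bigr)g$ via the product rule and use the lower bound $\xi$ to bound the reciprocal factors in $D(i^{\mathrm{sx}}_\cap[f])g$, hence the integrands. In fact you are somewhat more careful than the paper's own argument, which stops after exhibiting the first functional derivative and conditions the conclusion on integrability of the two terms, whereas you additionally note the implicit compactness of the domain forced by $f\geq\xi$, justify differentiation under the integral by dominated convergence, and iterate the bound to all orders to actually obtain smoothness.
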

    \begin{proof}
    The change of $I^\mathrm{sx}_\cap$ when shifting $f$ infinitesimally in the direction of $g$ is given by
    \begin{align*}
            \left[\frac{\mathrm{d}I^\mathrm{sx}_\cap[f+ \epsilon g]}{\mathrm{d}\epsilon}\right]_{\epsilon = 0} = \int \mathrm{d}t\mathrm{d}s_1\mathrm{d}s_2 \, \frac{\delta I^\mathrm{sx}_\cap[f]}{\delta f(t, s_1, s_2)} g(t, s_1, s_2)  \, .
    \end{align*}
    where $\delta I^\mathrm{sx}_\cap[f]/\delta f(t, s_1, s_2)$ is the functional derivative of $I^\mathrm{sx}_\cap$~\cite{cartan1971differential}.
    By applying a product rule of differentiation to the functional
    \begin{align*}
        I^\mathrm{sx}_\cap[f] = \int \mathrm{d}t\,\mathrm{d}s_1\,\mathrm{d}s_2\,
        f(t, s_1, s_2) \, i^\mathrm{sx}_\cap[f](t, s_1, s_2) \, ,
    \end{align*}
    which depends on $f$ in two places, we obtain
    \begin{gather*}
            \left[\frac{\mathrm{d}I^\mathrm{sx}_\cap[f+ \epsilon g]}{\mathrm{d}\epsilon}\right]_{\epsilon = 0}\\
            = \left[\frac{\mathrm{d}}{\mathrm{d}\epsilon}\int \mathrm{d}t\,\mathrm{d}s_1\,\mathrm{d}s_2\, (f+\epsilon g)(t, s_1, s_2) \right.\\
            \left.\vphantom{\int} i^\mathrm{sx}_\cap[f+\epsilon g](t, s_1, s_2)\right]_{\epsilon = 0}\\
    \end{gather*}
    \begin{gather*}
            = \int \mathrm{d}t\mathrm{d}s_1\mathrm{d}s_2 \, \{i^\mathrm{sx}_\cap[f](t, s_1, s_2) \, g(t, s_1, s_2)\\
            + f(t, s_1, s_2) \left[D(i^\mathrm{sx}_\cap[f])g\right](t, s_1, s_2)\}\\
            = \int \mathrm{d}t\mathrm{d}s_1\mathrm{d}s_2 \, \left[\left\{i^\mathrm{sx}_\cap[f] + f D(i^\mathrm{sx}_\cap[f])\right\}g\right](t, s_1, s_2)\,,
    \end{gather*}
    from which we can read off the functional derivative as the functional

    \begin{align*}
        \frac{\delta I^\mathrm{sx}_\cap[f]}{\delta f(t, s_1, s_2)}
    = i^\mathrm{sx}_\cap[f] + f D(i^\mathrm{sx}_\cap[f]) \,.
    \end{align*}
    If the marginals of the joint density $f$ are bounded from below, the magnitude of the Fréchet derivative, $|D_{g}(i_\cap^{\mathrm{sx}}[f])|$, is bounded, too. Hence the functional derivative is well-defined for all points $(t, s_1, s_2)$.
    Thus the total change of the functional $I^\mathrm{sx}_\cap[f]$ with respect to an infinitesimal change of $f$ in the direction of $g$ exists if and only if both $i^\mathrm{sx}_\cap[f]g$ and $f D(i^\mathrm{sx}_\cap[f])g$ are integrable functions over the space of realizations.
    \end{proof}

\section{Analytical probability distributions for the logic gates and sum examples}
\label{apx:examples_distributions}

The probability densities for the three continuous logic gates as well as the sum example can be expressed as multivariate Gaussian distributions of the form

\begin{equation*}
    p_{T, S_1, S_2}(t, s_1, s_2) = \frac{1}{\sqrt{(2 \pi)^l \det \Sigma}} \exp{-\frac{1}{2} 
\bm{x}^T \, \Sigma^{-1} \, \bm{x}} \, ,
\end{equation*}
where $\bm{x} = \begin{pmatrix}t&s_1&s_2\end{pmatrix}^T$ and the covariance matrix

\begin{equation*}
    \Sigma = 
    \begin{pmatrix}
    \Sigma_T & \Sigma_{TS}\\
    \Sigma_{ST} & \Sigma_S
    \end{pmatrix}
\end{equation*} depends on the gate. Note, further, that the integer $l$ is equal to $3$ for the redundant and unique gate as well as the sum example while it is equal to $4$ for the copy gate because of the two-dimensional target.

An analytical formulation for the mutual information for such Gaussian models is given by \cite{coverthomas}

\begin{equation*}
    I(T, \bm S) = \frac{1}{2} \log_2 \left(\frac{\det \Sigma_T \det \Sigma_S}{\det \Sigma}\right).
\end{equation*}
To make the mutual information finite, noise with standard deviation $\sigma = 0.01$ is added to the target variable. While the choice of this parameter influences the absolute sizes of the PID atoms, it does not majorly affect the qualitative analysis of the logic gates (\cref{tab:sigma_gates}). However, analogously to the KSG estimator for mutual information, a higher number of samples is necessary to correctly estimate the redundant information when the dependence between the variables is strong~\cite{kraskov}.
Furthermore, the constant $\delta = 10^{-9}$ is introduced in the redundant gate for numerical reasons to avoid a singular matrix. It has been carefully chosen small enough to not affect the results up to the presented precision.

\begin{table*}[!]
    \centering
    \caption{\textbf{The PID of the four toy examples does not qualitatively differ when varying the amount of noise $\sigma$ that is added to the source variable $T$.} The results have been computed using a Monte-Carlo integration analogous to \cref{tab:analytical_examples}.}
    \begin{ruledtabular}
    \begin{tabular}{rcccc}
        &  Redundant Gate & Copy Gate & Unique Gate & Sum\\
        \hline
        $\sigma = 0.01$\\
        \hline
        $\Pi_\mathrm{syn}$ & $0.000 \, \mathrm{bits}$ & $6.644 \, \mathrm{bits}$ & $5.525  \, \mathrm{bits}$ & $6.497  \, \mathrm{bits}$\\
        $\Pi_\mathrm{unq,1}$ & $0.000    \, \mathrm{bits}$ & $0.000 \, \mathrm{bits}$ & $1.119 \, \mathrm{bits}$ & $0.147  \, \mathrm{bits}$\\
        $\Pi_\mathrm{unq,2}$ & $0.000    \, \mathrm{bits}$ & $0.000 \, \mathrm{bits}$ & $-5.525 \, \mathrm{bits}$ & $0.147  \, \mathrm{bits}$\\
        $\Pi_\mathrm{red}$ & $6.644 \, \mathrm{bits}$ & $6.644 \, \mathrm{bits}$ & $5.525   \, \mathrm{bits}$ & $0.353  \, \mathrm{bits}$\\
        $I(T:S_1, S_2)$ & $6.644 \, \mathrm{bits}$ &  $13.288 \, \mathrm{bits}$ & $6.644 \, \mathrm{bits}$ & $7.144  \, \mathrm{bits}$\\
        \hline
        $\sigma = 0.1$\\
        \hline
        $\Pi_\mathrm{syn}$ & $0.000 \, \mathrm{bits}$ & $3.322 \, \mathrm{bits}$ & $2.388  \, \mathrm{bits}$ & $3.186  \, \mathrm{bits}$\\
        $\Pi_\mathrm{unq,1}$ & $0.000 \, \mathrm{bits}$ & $0.000 \, \mathrm{bits}$ & $0.942 \, \mathrm{bits}$ & $0.143  \, \mathrm{bits}$\\
        $\Pi_\mathrm{unq,2}$ & $0.000 \, \mathrm{bits}$ & $0.000 \, \mathrm{bits}$ & $-2.388 \, \mathrm{bits}$ & $0.143  \, \mathrm{bits}$\\
        $\Pi_\mathrm{red}$ & $ 3.329 \, \mathrm{bits}$ & $3.329 \, \mathrm{bits}$ & $2.388 \, \mathrm{bits}$ & $0.353  \, \mathrm{bits}$\\
        $I(T:S_1, S_2)$ & $ 3.329 \, \mathrm{bits}$ &  $6.658 \, \mathrm{bits}$ & $3.329 \, \mathrm{bits}$ & $3.826  \, \mathrm{bits}$\\
        \hline
        $\sigma = 0.001$\\
        \hline
        $\Pi_\mathrm{syn}$ & $0.000 \, \mathrm{bits}$ & $9.966 \, \mathrm{bits}$ & $8.822 \, \mathrm{bits}$ & $9.819  \, \mathrm{bits}$\\
        $\Pi_\mathrm{unq,1}$ & $0.000 \, \mathrm{bits}$ & $0.000 \, \mathrm{bits}$ & $1.143 \, \mathrm{bits}$ & $0.147 \, \mathrm{bits}$\\
        $\Pi_\mathrm{unq,2}$ & $0.000 \, \mathrm{bits}$ & $0.000 \, \mathrm{bits}$ & $-8.822 \, \mathrm{bits}$ & $0.147 \, \mathrm{bits}$\\
        $\Pi_\mathrm{red}$ & $9.965 \, \mathrm{bits}$ & $9.966 \, \mathrm{bits}$ & $8.822\, \mathrm{bits}$ & $0.353  \, \mathrm{bits}$\\
        $I(T:S_1, S_2)$ & $9.966 \, \mathrm{bits}$ &  $19.932 \, \mathrm{bits}$ & $9.966 \, \mathrm{bits}$ & $10.466  \, \mathrm{bits}$\\
    \end{tabular}
    \end{ruledtabular}

    \label{tab:sigma_gates}
\end{table*}

The covariance matrices are found as follows:

\paragraph{Redundant gate}
\begin{equation*}
    \Sigma =
    \begin{pmatrix}
        1 + \sigma^2 & 1 - \delta & 1 - \delta\\
        1 - \delta & 1 & 1 - \delta\\
        1 - \delta & 1 - \delta & 1
    \end{pmatrix}
\end{equation*}

\paragraph{Copy gate}
\begin{equation*}
    \Sigma = 
    \begin{pmatrix}
        1 + \sigma^2 & 0 & 1 & 0\\
        0 & 1 + \sigma^2 & 0 & 1\\
        1 & 0 & 1 & 0\\
        0 & 1 & 0 & 1
    \end{pmatrix}
\end{equation*}
where $l=4$ and $\bm x = \begin{pmatrix}t_1, t_2, s_1, s_2\end{pmatrix}^T$,

\paragraph{Unique gate}
\begin{equation*}
   \Sigma = 
   \begin{pmatrix}
       1 + \sigma^2 & 1 & 0\\
       0 & 1 & 0\\
       1 & 0 & 1\\
   \end{pmatrix}
\end{equation*}

\paragraph{Continuous sum}
\begin{equation*}
    \Sigma =
    \begin{pmatrix}
        2 + \sigma^2 & 1 & 1\\
        1 & 1 & 0\\
        1 & 0 & 1
    \end{pmatrix} \, .
\end{equation*}

\section{Formal derivation of the Kozachenko-Leonenko estimator for Shannon differential entropy}
\label{apx:kl}

The proposed estimator for continuous $I^\mathrm{sx}_\cap$ is based on ideas of the KSG estimator, originally introduced by \citet{kraskov} in 2004, which is explained here for completeness. The main advancement of the KSG estimator was to apply the Kosachenko-Leonenko estimator \cite{kl} for (differential) entropy in a specific way, allowing for estimation of $I(X;Y) = H(X) + H(Y) - H(X,Y)$. Here we will continue in a very similar fashion. 

Kosachenko and Leonenko estimate (differential) entropy on the grounds of the assumption that the density to consider for the entropy is locally constant around a measured data point $x_i$, i.e., $q_i(\epsilon) = \int\limits_{B_\epsilon(x_i)} \mathrm{d}x \, p_{X}(x) \approx \lambda_X\left( B_\epsilon(x_i) \right) p_X(x_i)$, where $B_\epsilon(x_i)$ is the $\epsilon$-ball around $x_i$. Then Kosachenko and Leonenko introduce a probability density function $P_k(\epsilon)$, such that $P_k(\epsilon) d\epsilon$ determines the probability of  $k-1$ points lying within radius $\epsilon$ around $x_i$, a $k$-th lying within the shell of radii within $[\epsilon, \epsilon + d\epsilon]$ around $x_i$, and the remaining $N-k-1$ points lying outside.

Using well-known integral identities, we find
\begin{align*}
    \int\limits_{\mathbb{R}^+} \mathrm{d}\epsilon \, \log[q_i(\epsilon)] P_k(\epsilon) = \psi(k) - \psi(N) \, .
\end{align*}

On the other hand, since $p_X(x_i) \lambda_X\left( B_{\epsilon_i}(x_i) \right) \approx q_i(\epsilon_i)$, it follows that
\begin{gather*}
    \log\left[p_{X}(x_i) \lambda_X\left( B_{\epsilon_i}(x_i) \right)\right] \approx \mathbb{E}\left[\log[q_i]\right] = \psi(k) - \psi(N)
\end{gather*}
such that a point estimate $\hat H$ of the differential entropy can be determined as
\begin{gather*}
    \hat{H}(X) = -\frac{1}{N} \sum\limits_{i=1}^N \widehat{\log\left[p_{X}(x_i)\right]}\\
    = - \psi(k) + \psi(N) + \frac{1}{N} \sum\limits_{i=1}^N \log\left[\lambda_X\left( B_{\epsilon_i}(x_i)\right)\right].
\end{gather*}

The volume bias terms $\lambda_X\left( B_\epsilon(x_i)\right)$ are numerically obtained by performing searches for the $k$-th nearest neighbor in the data set, the distance to whom is $\epsilon_i$.

\section{Estimation steps} \label{appendix_b_estimation_steps}

In \cref{subsection:generalizing_kraskov} we estimate $\log[f_\alpha(t, s_i)]$ for $\alpha = \bigvee_j a_j$ containing an OR statement by
\begin{gather*}
    \widehat{\log[f^i_\alpha(t, s_k)]} = \widehat{\log\left[ \sum_j f^i_{a_j}(t, s_k)\right]} \\
    \approx \log\left[ \sum _j e^{\widehat{\log[f^i_{a_j}]}}\right]\\
\end{gather*}
\begin{gather*}
    \overset{\text{KSG}}{=} -\psi(N) + \psi(k) + \log\left[ \sum_j v_{a_j, \epsilon}^i \right]
\end{gather*}
where the $v_{a_j, \epsilon}^i$ each represent the volume terms in the estimation at the $i-$th point. To the end of arriving at a form which is treatable by considerations akin to the KSG estimator, we have intermediately used an expansion of the sort $z = \exp[\log[z]], \, z \geq 0$.
Then, since by usage of the maximum norm $v_{a_j, T, \epsilon}^i = v_{a_j, \epsilon}^i \cdot v_{T, \epsilon}^i$, we find
\begin{equation}
\begin{gathered}
    \hat{I}^{sx}_\cap = \langle\widehat{\log[f^i_{\alpha, T}(t, s_k)]}\rangle_i - \langle\widehat{\log[f^i_\alpha(t, s_k)]}\rangle_i - \langle\widehat{\log[f^i_T(t, s_k)]}\rangle_i \nonumber \\
    = \psi(k) + \psi(N) - \langle \psi(n_\alpha(i))\rangle_i - \langle \psi(n_T(i))\rangle_i  
\end{gathered}
\end{equation}

\section{Implementation details}
\label{apx:implementation}

Having introduced an estimator for continuous $I^\mathrm{sx}_\cap$ in Sections \ref{sec:estimator} and \ref{sec:estimator_multivariate}, this section explains how this estimator can be realized in an efficient algorithm.

In the estimation procedure, the most computationally expensive step is the search for $k$ nearest neighbors for each sample point in the joint space and counting of neighbours within a ball in the marginal spaces. For the maximum norm, efficient $\mathcal O(n \log n)$ algorithms can be found for both low- ($kd$-trees) and high-dimensional (ball tree) data. Furthermore, approximate procedures exist for scaling beyond what the exact algorithms can compute.

In order to determine the distance $\epsilon$ to the $k$th nearest neighbor from a given disjunction statement, we proceed as follows: First, we obtain the $k$ nearest neighbors in each of the sets of variables of the corresponding antichain. Then we do a merging procedure: To get the $k$ closest points to any of the source variables, we successively take points with the smallest distance in any one of the subspaces until we have a total of $k$ points. While the kd-tree algorithm could be adapted to directly find neighbours according to this custom distance function, using the merging-procedure we achieve the same runtime complexity and can use highly optimized existing kd-tree implementations. The pseudocode for this function is given in \cref{alg:compute_epsilons}.

Following an analogous merging procedure, the number of points within a radius of $\epsilon$ from the disjunction statement in the source space (see \cref{alg:compute_n_alpha}) as well as the number of points within a radius $\epsilon$ from the query point in the target space (see \cref{alg:compute_n_T}) is determined. From these results, the redundancy can readily be computed according to \cref{alg:compute_redundancy}.

We implemented the described algorithm as a python package using SciPy~\cite{2020SciPy-NMeth} for efficient nearest-neighbor searches. The package is publicly available under 

\begin{center}
\href{https://gitlab.gwdg.de/wibral/continuouspidestimator}{gitlab.gwdg.de/wibral/continuouspidestimator}
\end{center}   

\begin{figure*}
\begin{minipage}[t]{.95\columnwidth}
\begin{algorithm}[H]
    \KwData{Source samples $(\bm S_{j})_j$; Target samples $(T_{j})_j$; Antichain $\alpha \in \mathcal{P}(\mathcal{P}(\{1, \dots, n\}$))}
    \KwResult{Distances from the disjunction to the $k$-th nearest neighbor in the joint source-target space.}
    
    \tcp{For each sample point indexed by $j$}

    \For{$j \in (1, \dots, n_{\mathrm{samples}})$}{

        \tcp{Find the $k$ nearest neighbors in each of the spaces $\bm S_{\bm a} \cross T$ for all sets $\bm a$ in the antichain $\alpha$}
        
        \For{$\bm a \in \alpha$}{

            $\mathrm{kNN}^{\bm ST}_{\bm aj} \gets \mathrm{find\_k\_nearest\_neighbors}(\bm{S}_{\bm aj} \cross T_j)$

        }
    
        \tcp{Merge points from all separate nearest neighbor searches in the individual spaces to get the $k$ unique samples with the smallest distances to their respective query point $j$}
        $\mathrm{kNN\_merged^{\bm ST}_{j}} \gets \mathrm{merge}(\left\{\mathrm{kNN}^{\bm ST}_{\bm aj}\mid|\bm a \in \alpha \right\})$

        \tcp{Get search radius $\epsilon$ as the kth smallest distance}
        $\epsilon_j \gets \mathrm{dist\_to\_kth}(\mathrm{kNN\_merged^{\bm ST}_{j}})$
    }

    \Return{$(\epsilon_j)_j$}
    \caption{compute\_epsilons}
    \label{alg:compute_epsilons}
\end{algorithm}
\vspace{2cm}
\end{minipage}
\begin{minipage}[t]{0.95\columnwidth}
\begin{algorithm}[H]
    \KwData{Source samples $(\bm S_{j})_j$;  Antichain $\alpha \in \mathcal{P}(\mathcal{P}(\{1, \dots, n\}$)); Distances from the disjunction to the $k$-th nearest neighbor in the joint source-target space $(\epsilon_j)_j$}
    \KwResult{Number of points in the source disjunction space with distance less than $\epsilon_j$ from the disjunction}

    \For{$j \in (1, \dots, n_{\mathrm{samples}})$}{

        \tcp{Get nearest neighbors in ball of radius $\epsilon$ in the individual conjunction spaces}

        \For{$\bm a \in \alpha$}{
            $\mathrm{NN}^{\bm S}_{\bm aj} \gets \mathrm{find\_points\_within\_distance}(\bm S_{\bm aj}, \epsilon_j)$
        }
    
        \tcp{Count unique samples within any of the epsilon balls for the individual conjunction spaces}
        
        $n_{\alpha j}$ $\gets \mathrm{count\_unique}(\left\{\mathrm{NN}^{\bm S}_{\bm aj}\mid|\bm a \in \alpha \right\})$\\
    }
    \Return{$(n_{\alpha j})_j$}
    \caption{compute\_$n_\alpha$}
    \label{alg:compute_n_alpha}

\end{algorithm}
\end{minipage}
\\
\begin{minipage}[t]{.95\columnwidth}
\begin{algorithm}[H]
    \KwData{Target samples $(T_{j})_j$; Antichain $\alpha \in \mathcal{P}(\mathcal{P}(\{1, \dots, n\}$)); Distances from the disjunction to the $k$-th nearest neighbor in the joint source-target space $(\epsilon_j)_j$}
    \KwResult{Number of points in the target space with distance less than $\epsilon_j$ from the disjunction}

    \For{$j \in (1, \dots, n_{\mathrm{samples}})$}{

        \tcp{Get nearest neighbors in ball of radius $\epsilon$ in the target space}

        $\mathrm{NN}^{T}_{\bm j} \gets \mathrm{find\_points\_within\_distance}(\bm T_{j}, \epsilon_j)$
    
        \tcp{Count samples within the epsilon ball for the target space}
        
        $n_{T j}$ $\gets \mathrm{count}(\mathrm{NN}^{T}_{j})$\\
    }
    \Return{$(n_{T j})_j$}

    \caption{compute\_$n_T$}
    \label{alg:compute_n_T}
\end{algorithm}
\end{minipage}
\begin{minipage}[t]{.95\columnwidth}
\begin{algorithm}[H]
    \KwData{Source samples $(\bm S_{j})_j$; Target samples $(T_{j})_j$; Antichain $\alpha \in \mathcal{P}(\mathcal{P}(\{1, \dots, n\}$))}
    \KwResult{Distances from the disjunction to the $k$-th nearest neighbor in the joint source-target space.}

    \tcp{Compute distances in joint space}
    $\epsilon \gets \mathrm{compute\_epsilons}((\bm S_j)_j, (T_j)_j)$

    \tcp{Compute number of neighbors in source space}
    $n_\alpha \gets \mathrm{compute\_}n_\alpha((\bm S_j)_j, \epsilon)$

    \tcp{Compute number of neighbors in target space}
    $n_T \gets \mathrm{compute\_}n_\alpha((T_j)_j, \epsilon)$
    
    \tcp{Compute estimate for redundancy}
    $I_\cap \gets \Psi(k) + \Psi(n_\mathrm{samples}) - \left<\Psi(n_\alpha[j]) + \Psi(n_T[j])\right>_j$
    
    \Return{$I_\cap$}\;
    \caption{compute\_redundancy}
    \label{alg:compute_redundancy}
\end{algorithm}
\end{minipage}
\end{figure*}

\section{Role of the parameter $k$}
\label{apx:k}

The characteristics of the estimates produced by the procedure laid out in \cref{sec:estimator} are influenced by the choice of the parameter $k$, which determines which neighbors are used to compute the distance $\epsilon$ in the joint space by requiring that the $k-1$ closest neighbors are skipped. For larger $k$, these distances $\epsilon$ will thus be larger as well, which has two important ramifications: Firstly, since the considered $\epsilon$-neighborhoods are larger and contain more points, fluctuations in the positions of individual points have less impact on the estimation leading to less variance. Secondly, however, since the estimator relies on the assumption that the probability density is approximately constant over the $\epsilon$-neighborhood, larger $k$ for a fixed $N$ may make the estimator unable to resolve high-frequency fluctuations which occur in the data. This may lead to a higher bias in the estimation for a given number of samples $N$.
The estimation results do not depend on the parameter $k$ alone but only on the fraction $k/N$ (as visible in \cref{fig:comparison_k} and \cref{fig:comparison_kN}), as this determines the average distance $\epsilon$.

\vspace{-\baselineskip}

\begin{figure*}[ht]
    \begin{minipage}{0.45\textwidth}
    \centering
    \includegraphics[width=\columnwidth]{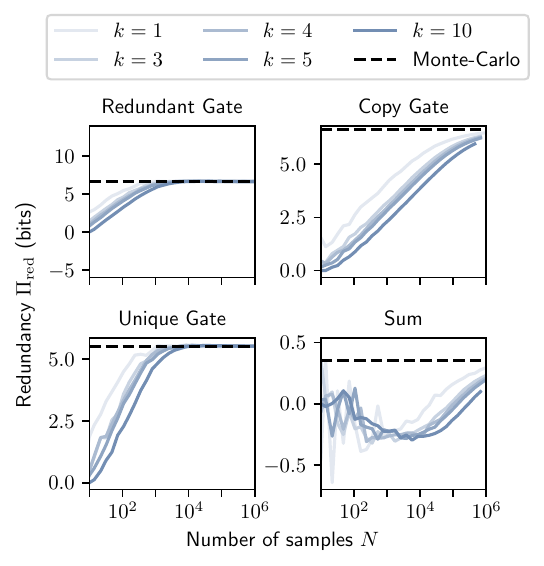}
    \caption{\textbf{On the demo gates, the estimation converges to the Monte-Carlo results with increasing number of samples $N$ for all fixed parameters $k$}}
    \label{fig:comparison_k}
    \end{minipage}\qquad
    \begin{minipage}{0.45\textwidth}
    \centering
    \includegraphics[width=\columnwidth]{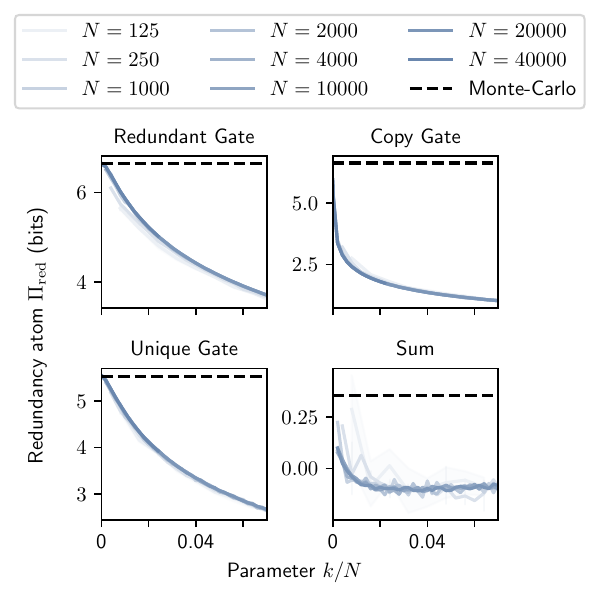}
    \caption{\textbf{The result of the estimation does not depend on the value of the parameter $k$ alone but only on its fraction $k/N$.}}
    \label{fig:comparison_kN}
    \end{minipage}
\end{figure*}

\section{Skewed sum} \label{apx:skewed_sum}
We define the covariance matrix for the skewed sum as
\begin{equation*}
    \Sigma =
    \begin{pmatrix}
        2.2 + \sigma^2 & 1.2 & 1\\
        1.2 & 1.2 & 0\\
        1 & 0 & 1
    \end{pmatrix} \, .
\end{equation*}

The results of the continuous PID analysis are summarized in table \cref{tab:skewed_sum}

\begin{table}[!]
    \centering
    \caption{\textbf{PID results for skewed sum example} The results have been computed using a Monte-Carlo integration analogous to \cref{tab:analytical_examples}.}
    \begin{ruledtabular}
    \begin{tabular}{rc}
        & Skewed Sum\\
        \hline
        $\Pi_\mathrm{syn}$ & $9.884 \, \mathrm{bits}$\\
        $\Pi_\mathrm{unq,1}$ & $0.214 \, \mathrm{bits}$\\
        $\Pi_\mathrm{unq,2}$ & $0.082 \, \mathrm{bits}$\\
        $\Pi_\mathrm{red}$ & $0.355 \, \mathrm{bits}$\\
        $I(T:S_1, S_2)$ & $ 10.535 \, \mathrm{bits}$
        
    \end{tabular}
    \end{ruledtabular}
    \label{tab:skewed_sum}
\end{table}

\section{A Glimpse into Mixed systems of discrete and continuous variables}
\label{apx:mixed}

Although systems including both discrete and continuous variables do at this point in time exceed the capabilities of the estimator developed in this work, we still intend to share some theoretical aspects about potential future endeavours.
Thus, we propose the following treatment of systems, which are composed of both discrete and continuous variables. Assuming a mixture of discrete and continuous sources $S_j$ and an either discrete of continuous target $T$, we suggest utilizing the same exponential expansion as in the purely continuous case \cref{definition_1}. This leads to the approximation
\begin{widetext}
\begin{align*}
    \widehat{\log[f^i_\alpha]} \approx \log\left[ \sum\limits_{k=1}^{|I_d| + 1} (-1)^{k+1} \left(\sum\limits_{\substack{m_1 < m_2 < \ldots < m_{k-1} \in I_d \\ m_1 \neq m_2 \neq \ldots \neq m_{k-1} \neq j \in J}} e^{\widehat{\log\left[f_{\bigcup\limits_{q = 1}^{k-1} a_{m_q} \cup a_j}\left(\bigcup\limits_{q = 1}^{k-1} x_{a_q} \cup x_{a_j}\right)\right]}} \right) \right] \, .
\end{align*}
\end{widetext}
Here then we have a number of the possible combinations of statements $\bigcup\limits_{q = 1}^{k-1} a_{m_q} \cup a_j$ containing at least one discrete variable (denoted by $a_j$). The statements consisting solely of continuous variables will be treated individually by KSG-like estimation while the others will be handled by isolating the discrete variables and conditioning on those, using the identity $H(\{S_k\}_{k \in K}) - H( \{S_m\}_{m \in K \setminus L}) = H(\{S_l\}_{l \in L}| \{S_n\}_{n \in K \setminus L})$ for some index sets $L \subset K$. This leads to
\begin{widetext}
\begin{align*}
    &\log\left[f^i_{\bigcup\limits_{q = 1}^{k-1} a_{m_q} \cup a_j}\right] = \log\left[f^i_{c|d (\bigcup\limits_{q = 1}^{k-1} a_{m_q} \cup a_j)}\right] + \log\left[f^i_{disc(\bigcup\limits_{q = 1}^{k-1} a_{m_q} \cup a_j)}\right]\\
    \implies &\widehat{\log\left[f^i_{\bigcup\limits_{q = 1}^{k-1} a_{m_q} \cup a_j}\right]} = \widehat{\log\left[f^i_{c|d (\bigcup\limits_{q = 1}^{k-1} a_{m_q} \cup a_j)}\right]} + \widehat{\log\left[f^i_{disc(\bigcup\limits_{q = 1}^{k-1} a_{m_q} \cup a_j)}\right]} \\
    &\,\overset{\text{KSG}}{=} \psi(k) - \psi(N) - \log\left[v_{\epsilon, c|d(\bigcup\limits_{q = 1}^{k-1} a_{m_q} \cup a_j)}^{i}\right] +  \widehat{\log\left[f^i_{disc(\bigcup\limits_{q = 1}^{k-1} a_{m_q} \cup a_j)}\right]} \, ,
\end{align*}
\end{widetext}
where we have omitted the explicit dependent variables $\bigcup\limits_{q = 1}^{k-1} x_{a_q} \cup x_{a_j}$ for simplicity. Further, $c|d(a)$ denotes the continuous random variables in $a$ conditioned on the discrete variables in $a$.
Here we have treated the first estimated logarithm via KSG estimation, as here the only dependent variables are purely continuous, while the latter term can be evaluated via a standard plugin estimator as it is a purely discrete entropy.
Thus the estimators for $\alpha$ and $(\alpha, T)$ become
\begin{widetext}
\begin{align*}
    \widehat{\log[f^i_\alpha]} &= \psi(k) - \psi(N) -  \underbrace{\left\langle \log\left[\sum\limits_{k=1}^{|I_d| + 1} (-1)^{k+1} \left(\sum\limits_{\substack{m_1 < m_2 < \ldots < m_{k-1} \in I_d \\ m_1 \neq m_2 \neq \ldots \neq m_{k-1} \neq j \in J}} v_{\epsilon, c|d(\bigcup\limits_{q = 1}^{k-1} a_{m_q} \cup a_j)}^{i} \hat{f}^i_{disc(\bigcup\limits_{q = 1}^{k-1} a_{m_q} \cup a_j)} \right) \right]  \right\rangle_i}_{=:G_\alpha} \text{ and}
\end{align*}
\begin{align*}
    \widehat{\log[f^i_{\alpha, T}]} &= \psi(k) - \psi(N) - \underbrace{\left\langle \log\left[\sum\limits_{k=1}^{|I_d| + 1} (-1)^{k+1} \left(\sum\limits_{\substack{m_1 < m_2 < \ldots < m_{k-1} \in I_d \\ m_1 \neq m_2 \neq \ldots \neq m_{k-1} \neq j \in J}} v_{\epsilon, c|d(\bigcup\limits_{q = 1}^{k-1} a_{m_q} \cup a_j, \, T)}^{i} \hat{f}^i_{disc(\bigcup\limits_{q = 1}^{k-1} a_{m_q} \cup a_j, \, T)} \right) \right]  \right\rangle_i}_{=:G_{\alpha, T}} \, .
\end{align*}    
\end{widetext}
Using the joint space for initial determination of $\epsilon^i$ as the distance to $(x_1,\ldots x_n)^i$-th $k$-th nearest neighbor, the final estimator reads
\begin{widetext}
\begin{align*}
    \hat{I}^{\mathrm{sx}}_\cap =\begin{cases} \psi(k_{c|d(\alpha, T)}) + \psi(N) - \langle \psi(n_{c|d(\alpha)}(i))\rangle_i - \langle \psi(n_T(i))\rangle_i - \Delta \mathrm{vol}_{\alpha, T} & T \text{ cont.} \\
    \psi(k_{c|d(\alpha, T)}) + \psi(N) - \langle \psi(n_{c|d(\alpha)}(i))\rangle_i & T \text{ disc.}\end{cases} \, ,
\end{align*}
\end{widetext}
with $\Delta \mathrm{vol}_{\alpha, T} = G_{\alpha, T} - G_{\alpha} - \langle\log[v^i_{\epsilon, T}]\rangle_i$ the resulting differential volume term, $c|d(\alpha)$ being the continuous statement, each conditioned on the discrete variables. 

\begin{example}
Considering $\alpha = X_1 \land X_2 \lor X_3 \lor X_4 \land X_5$, with $X_1, X_2$ and $X_4$ discrete, $X_3, X_5$ continuous, we find $c|d(\alpha)= X_3 \lor X_5|X_4$. Note $X_1 \lor X_2$ vanished from the continuous statement as they have already been treated inside $G_{\alpha, T}$ and $G_{\alpha}$.
\end{example}

\onecolumngrid
\clearpage
\twocolumngrid
\bibliographystyle{unsrtnat}
\bibliography{bibliography}

\end{document}